%% file: main.tex
\documentclass{article}

\input{settings/packages}
\input{settings/macros}
\iclrfinalcopy

\title{MIRT: Transformers for Truthful Generative Auctions with Whole-feed Permutation Externalities}

\author{%
    \parbox[t]{0.95\textwidth}{\centering\normalfont
      {\small\bfseries
        Ali Elahi$^{1}$\enspace
        Ermis Soumalias$^{2}$\enspace
        Jason Cheuk Nam Liang$^{2}$\enspace
        Daniel Yao$^{2}$\enspace
        Michael J. Curry$^{1}$
      }\\[0.6em]
      {\small $^{1}$University of Illinois Chicago \qquad $^{2}$Meta}\\[0.6em]
      {\footnotesize $^{1}$\texttt{\{aelahi6, mjc\}@uic.edu}\\[0.2em] $^{2}$\texttt{\{esoumalias, jcnliang, danielyao\}@meta.com}}
    }%
  }

\begin{document}

\maketitle

\fancyhead{}
\lhead{Preprint.}
% Abstract
\input{sections/00-abstract}

% Introduction
\input{sections/10-introduction}

% Our Model
\input{sections/20-model}

\input{sections/30-auctions}

% Experimental Results
\input{sections/40-experiments}

% Conclusion and Future Directions
\input{sections/50-conclusion}

\bibliographystyle{iclr2027_conference}
\bibliography{references}

%%%%%%%%%%%%%%%%%%%%%%%%%%%%%%%%%%%%%%%%%%%%%%%%%%%%%%%%%%%%

% \appendix

% \section{Technical appendices and supplementary material}
% Technical appendices with additional results, figures, graphs, and proofs may be submitted with the paper submission before the full submission deadline (see above). You can upload a ZIP file for videos or code, but do not upload a separate PDF file for the appendix. There is no page limit for the technical appendices. 

% Note: Think of the appendix as ``optional reading'' for reviewers. The paper must be able to stand alone without the appendix; for example, adding critical experiments that support the main claims to an appendix is inappropriate. 

% %%%%%%%%%%%%%%%%%%%%%%%%%%%%%%%%%%%%%%%%%%%%%%%%%%%%%%%%%%%%
\clearpage
\input{sections/60-appendix}
% \newpage
% \input{sections/100 - checklist}

\end{document}

%% file: settings/packages.tex
\usepackage{iclr2027_conference,times}
\usepackage[utf8]{inputenc}
\usepackage[T1]{fontenc}

\usepackage{
  amsmath,
  amssymb,
  amsfonts,
  mathtools,
  amsthm
}

\usepackage{thm-restate}

\usepackage{bbm}
\usepackage{booktabs}
\usepackage{multirow}
\usepackage{colortbl}
\usepackage{etoolbox}
\usepackage{siunitx}
\usepackage{color}
\usepackage{xcolor}
\usepackage{soul}

\usepackage{csquotes}
\usepackage{nicefrac}
\usepackage{microtype}
\usepackage{url}

\usepackage[hidelinks]{hyperref}
\usepackage{cleveref}

\usepackage{subcaption}
\usepackage{makecell}
\setcitestyle{numbers,square}

%% file: settings/macros.tex
\newrobustcmd{\ccell}{{\cellcolor{gray!40}}}

\definecolor{tyrianpurple}{HTML}{66023C}

\DeclarePairedDelimiterX{\inp}[2]{\langle}{\rangle}{#1, #2}

\newtheorem{theorem}{Theorem}

\newtheorem{lemma}{Lemma}

\usepackage[table]{xcolor} % Required for cell coloring

\definecolor{hlgreen}{RGB}{210, 235, 215}
\newcommand{\hlg}[1]{\cellcolor{hlgreen}#1}

%% file: sections/00-abstract.tex
\begin{abstract}
Modern online platforms commonly rank ads and organic content separately before blending them into a feed displayed to the user, overlooking externalities: an item's click-through rate depends on its surrounding content, not only on its own position. Recent learning-based feed generation mechanisms model some of these cross-type interactions to globally optimize for the whole feed's welfare. However, these approaches either fix the ordering of organic content, or lack exact strategyproofness guarantees for bidders. To combat these shortfalls, we introduce the \textit{Maximal-in-Range Transformer (MIRT)} mechanism class, which uses a transformer to generate a range of candidate feeds that jointly order ads and organic content, and selects the welfare-maximizing feed in the range. However, there is a tension: strategyproofness requires the generated range to be bid-independent, even though a candidate feed's welfare depends linearly on the bids. Our key technical contribution is a reinforcement learning approach that incorporates both candidate generation and bid-aware selection into training, enabling a \textit{bid-independent} transformer to learn to generate high-welfare ranges by accounting for both individual feed quality and the collective quality of the range. Additionally, we bound the pseudo-dimension of the MIRT class under hard attention, showing that near-optimal expected welfare is learnable with sample complexity polynomial in the transformer size and only logarithmic in the range size. Empirically, MIRT outperforms the previous non-strategyproof state-of-the-art feed models while remaining exactly strategyproof. Our results show that transformer-based auctions can deliver externality-aware whole-feed optimization without sacrificing exact incentive compatibility, removing a major obstacle to their practical deployment.
\end{abstract}

%% file: sections/10-introduction.tex
\section{Introduction}

In modern feed-based ranking systems, organic content and sponsored ads are commonly ranked in independent pipelines and then blended into a single feed \citep{carrion2023blending, li2024deep, nga2025}. This architecture is tractable because it treats a post or ad's click-through-rate (CTR) as primarily a function of its own identity and position relative to other items of the same type. In practice, however, user engagement also depends on the surrounding content. For instance, a hiking post from a close friend may increase the \textit{click-through rate (CTR)} of a nearby outdoor-equipment ad, while several similar ads may instead create fatigue and reduce CTR. This effect is referred to as \textit{permutation externalities}: the CTR of an ad slot or organic post depends not only on its position but also on the surrounding content. When these externalities are significant, this tractability may result in a toll on total welfare. Motivated by this, we focus on the problem of training transformer models for jointly ranking organic and ad content, taking into account permutation externalities.

However, in the advertising context, the goal is not merely to choose a good ranking but also to run an auction. Joint ranking makes it more difficult to enforce \textit{incentive compatibility} among advertisers because an advertiser may benefit from strategic misreporting to influence the content surrounding its ad. Reusing the aforementioned example of an outdoor equipment advertisement: at a low bid, a joint feed aggregator might place a hiking post immediately before the ad, generating a high CTR for the advertiser; after the advertiser raises its bid, the aggregator might promote the ad above and far from the hiking post, so the CTR of the ad decreases because it loses the favorable surrounding context despite receiving a higher slot.  

More formally, a necessary condition for \textit{dominant-strategy incentive compatibility (DSIC)}, also known as \textit{strategyproofness}, is that realized advertiser benefits (e.g., CTR times click benefit) is nondecreasing in the advertiser’s bid, but an unconstrained joint ranker need not satisfy this condition. For many mechanism design problems, the techniques of differentiable economics~\cite{dutting2024optimal} have been used to design state-of-the-art strategyproof mechanisms by training neural networks. For feed generation in particular, recent methods such as CGA \citep{cga2024} and NGA \citep{nga2025} build on this approach. They train transformer models to choose rankings that take into account permutation externalities, and appropriate transfers that minimize incentive compatibility violations. However, these methods can only approximate strategyproofness. In fact, the strategyproofness gap is substantial in practice: in our experiments, we find possible gains from misreporting of more than 10\% of advertiser utility (\Cref{sec:experiments}). To the best of our knowledge, no prior work jointly optimizes the entire ad-organic ordering under permutation externalities while guaranteeing exact DSIC.

To directly address incentive issues, we introduce \textit{Maximal-in-Range Transformers (MIRT)}, a framework that combines the expressive power of transformer-based ranking for welfare-maximization under permutation externalities, with a maximal-in-range~\citep{Nisan07Computationally} allocation rule that ensures DSIC. In particular, MIRT uses a transformer to generate a set of candidate ad-organic joint rankings known as the ``range'', conditioned on ad and organic content, user features, and other constraints, but independent of advertiser bids. At serving time, a maximal-in-range mechanism observes the bids, selects the candidate among the range that maximizes reported welfare, and charges VCG transfers over the range. Using RL, we train the transformer to generate a small \textit{set} of feeds that capture contextual interactions among ads and organic content, accounting for both individual feed quality and the collective quality of the range.

\paragraph{Our Contributions.}
In \Cref{sec:model_and_preliminaries}, we present a general model of joint ad-organic optimization. Prior work either considers permutation externalities only among ads, or models ad-organic interactions while taking the organic ordering as fixed input. In \Cref{sec:methods} we introduce the MIRT mechanism class for exactly strategyproof ad-organic joint optimization under permutation externalities. 
MIRT uses a maximal-in-range approach over a transformer-generated, bid-independent range to ensure strategyproofness. The transformer is trained via RL to generate high-welfare ranges by accounting for both individual feed quality and the collective quality of the range.
To our knowledge, MIRT is the first transformer-based approach to feed optimization that provides a strategyproofness guarantee. 
In \Cref{sec:sample_complexity}, we provide pseudo-dimension bounds for the welfare class induced by MIRT, and show that the sample complexity of learning a MIRT mechanism with optimal expected welfare within the mechanism class is polynomial in the transformer size and only logarithmic in the range size. 
Finally, in \Cref{sec:experiments} we evaluate MIRT in synthetic environments, comparing against the previous state-of-the-art approaches on welfare, total transfer, and DSIC violations, 
and studying the effects of range size and externality strength. 
MIRT attains better welfare than baselines across all tested environments, while also being DSIC by construction. 
By comparison, the previous SOTA approaches have expected regret under truthful bidding as high as 10\% of advertiser utility.
%total transfer was --- here.
\subsection{Related Work} \label{sec:related_work}

\paragraph{Differentiable economics.} A powerful approach to automated mechanism design, sometimes called \textit{differentiable economics}, uses deep neural networks to represent and optimize economic mechanisms. The seminal work of \citet{dutting2024optimal} introduces the RegretNet technique, in which an auction is directly represented as a neural network, which is trained to optimize total transfers while minimizing regret violations. There is a wide literature building on this basic idea by improving network architectures and applying the technique to new mechanism design settings~\citep{pmlr-v162-duan22a,Curry22Learning}. The core limitation of RegretNet is that the guarantee of strategyproofness is only approximate and may be hard to measure. This has motivated the design of specialized network architectures which can train high-performing auctions while ensuring exact strategyproofness by construction, for example by restricting to the class of affine maximizer auctions and learning continuously-parameterized allocation menus~\citep{Curry22Differentiable,Duan23Scalablea}, by using a more computationally heavy but more general menu-based approach~\citep{Wang24GemNet}, or by using reinforcement learning to train a sequential mechanism~\citep{ravindranath2024deepreinforcementlearningsequential,brero2021reinforcement}. Several recent works are of particular interest in the context of MIRT. 
Prior works for feed generation~\cite{nga2025,cga2024} use a RegretNet-inspired training approach (further discussion of these and other works is in \Cref{app:neuralfeed}), exhibiting large DSIC violations. 
\citet{Duan23Scalablea} uses transformer networks to generate the parameters of an affine maximizer (a generalization of MIR mechanisms), but their mechanism design problem is classical additive auction design rather than our feed auctions, their architecture requires a hardcoded range size, and their training techniques search over a continuous space of lottery allocations.

\paragraph{Machine-Learning Powered Preference Elicitation}
While differentiable economics uses a learning algorithm to derive a performant mechanism, a related branch of research embeds the learning algorithm within the final auction mechanism itself, using it to aid preference elicitation in combinatorial settings where the agents' input space grows exponentially. The foundational works of \citet{blum2004preference} and \citet{lahaie2004applying} were the first to frame preference elicitation in CAs as a learning problem. More recently, \citet{brero2018combinatorial, brero2021workingpaper} and \citet{weissteiner2020deep, weissteiner2022fourier, weissteiner2022monotone, weissteiner2023bayesian} introduced ML-powered ICAs for auctions in which agents report cardinal bids, and \citet{Soumalias_2024} extended this approach to auctions driven by demand queries. \citet{soumalias2025_prices_bids_values} then showed that judiciously combining both query types yields substantially higher efficiency in combinatorial settings. The framework has since been extended to settings without transfers \citep{soumalias2024machinelearningpoweredcourseallocation}, and to mechanisms that incorporate LLMs so that agents can express their preferences in natural language \citep{soumalias2025llmpoweredpe}.

\paragraph{Maximal-in-range mechanisms.} 
The Vickrey-Clarke-Groves (VCG) mechanisms choose the allocation that maximizes welfare while using transfers that ensure strategyproofness~\citep{Vickrey61Counterspeculation,Clarke71Multipart,Groves73Incentives}.
If welfare maximization is only approximate, then the corresponding VCG transfers are in general not strategyproof.
One important exception is the class of maximal-in-range (MIR) mechanisms, which optimize exactly over a restricted range of possible outcomes~\citep{Lavi03Characterization,Nisan07Computationally}. As long as the range is chosen without considering bids, the resulting mechanism with VCG transfers is strategyproof. Mechanism design within this class then amounts to designing the range. %transfer was payment
Some existing theoretical work studies approximately welfare maximizing maximal-in-range mechanisms in single-parameter combinatorial auction settings~\citep{Goel10Single,Huang11BlackBox} and in multi-unit auctions~\citep{DobzinskiMechanisms}, as well as sample complexity of learning closely related classes of mechanisms~\citep{balcan16,BalcanSV18}.

%% file: sections/20-model.tex
\section{Model and Preliminaries} \label{sec:model_and_preliminaries}

A platform ranks ads and organic content in a user's feed. Let $\mathcal A=\{a_1,\ldots,a_M\}$ and $\mathcal O=\{o_1,\ldots,o_{N_O}\}$ denote the sets of candidate ads and organic posts, respectively. The set of all candidate items is $\mathcal I=\mathcal A\cup\mathcal O$, of size $N = M + N_O.$ Each advertiser \(j\in[M]\) submits a single ad \(a_j\) and bid \(b_j\in\mathbb R_+\), interpreted as its benefit conditional on a click, and has some underlying true benefit conditional on a click $v_j \in \mathbb{R}^+$. 
We write $B=(b_1,\ldots,b_M)$ for the bid profile. True types (benefits) are drawn from some distribution with CDF $F_i(v_i)$. %type was value
Each organic post \(o\in\mathcal O\) generates welfare \(h_o\geq 0\) when the user engages with it. 
Let  \(e_i\in\mathbb R^d\) be the embedding of item \(i\in\mathcal I\), and let \(C\in\mathcal C\) collect the features of all items.

A feed with horizon \(T\leq M+N_O\) is a sequence $x=(x_1,\ldots,x_T) \in \mathcal{X}.$
For every displayed item $i$, let \(p_x(i)=x^{-1}(i)\) denote its position in the feed, 
and let \(q_i(x)\) denote the rate at which the relevant engagement event occurs in feed $x$.
This rate may depend both on the item's position and on the other items in the feed. 
In particular, we have $q_j(x)=\rho_j(p) \gamma_j(x,p)$, where $\rho_j(p)$ is the baseline, position-dependent click-through rate of ad $a_j$ in the absence of externalities, and \(\gamma_j(x,p)\) is an externality multiplier determined by the content around the ad.

\paragraph{The platform's joint ranking optimization problem.}

The expected welfare generated by feed \(x\) assuming truthful bidding is
\begin{equation}
\label{eq:welfare}
V(x,B) = \sum_{a_j \in x \cap \mathcal{A}} b_j q_j(x) + \sum_{o \in x \cap \mathcal{O}} h_o q_o(x),
\end{equation}
where $h_o$ is the welfare per engagement of organic $o$. Thus, an ad click contributes to the corresponding advertiser's bid, while engagement with an organic post contributes to the platform's welfare.

\subsection{Modeling externalities}

The multiplier $\gamma_j(x,p)$ models how the organic posts surrounding an ad change its click-through rate. 
Our modeling is based on the intuition that placing an ad among relevant organic posts increases its CTR.
We experiment with the previous slot's similarity as the default externality: 
$
\gamma_j(x,p) = 1 + \eta\, s_{j,o_{p-1}}
\, (\eta = 0.5)
$ where $o_{p-1}$ is the item in the slot above $p$, and $s_{j,o_{p-1}}$ is the similarity between that post and the ad.
We explore various other models listed in Appendix \ref{app:externalityfunctions}. 

\subsection{Bidder Models and Transfers}
\label{sec:mechdesignbasics}

\paragraph{Bidder model and strategyproofness} In addition to choosing a feed, we allow charging a transfer $t_i(B)$ to each advertiser based on the bids, and implicitly the chosen feed $x$. A bidder with true benefit $v_i$ may misreport some $b_i.$ 
% We assume that bidders are quasilinear, that is, 
The utility of advertiser $i$ is: $u_i(b_i, v_i, b_{-i}) = v_i q_i(x(B)) - t_i(B).$ We can now define strategyproofness or dominant strategy incentive compatibility: $u_i(v_i, v_i, b_{-i}) \geq u_i(b_i', v_i, b_{-i}),$ for all bidders $i$, true benefits $v_i$, opponent bid profiles $b_{-i}$, and possible misreports $b_i'.$
% benefit was value
\paragraph{The VCG mechanism}
The platform's goal is to select the welfare-maximizing feed $x^\star \in \arg\max_{x\in \mathcal{X}}V(x,B)$. The Vickrey-Clarke-Groves mechanism is a strategyproof mechanism with $x^*$ as its allocation rule~\citep{Vickrey61Counterspeculation,Clarke71Multipart,Groves73Incentives}, which charges bidders the externality they cause to everyone else by being present in the auction: $t_i(B) = \max_{x \in \mathcal{X}} V(x, B_{-i}) - (V(x^*, B) - b_iq_i(x^*)).$ As discussed below, exactly computing $x^*$ as required for VCG is intractable, so we use MIR mechanisms instead, which have the same allocation and transfer rule but with feed choices restricted to $x \in R.$

\begin{figure}[t]
    \centering
    \caption{Schematic diagram of the MIRT architecture.}
    \includegraphics[width=\linewidth]{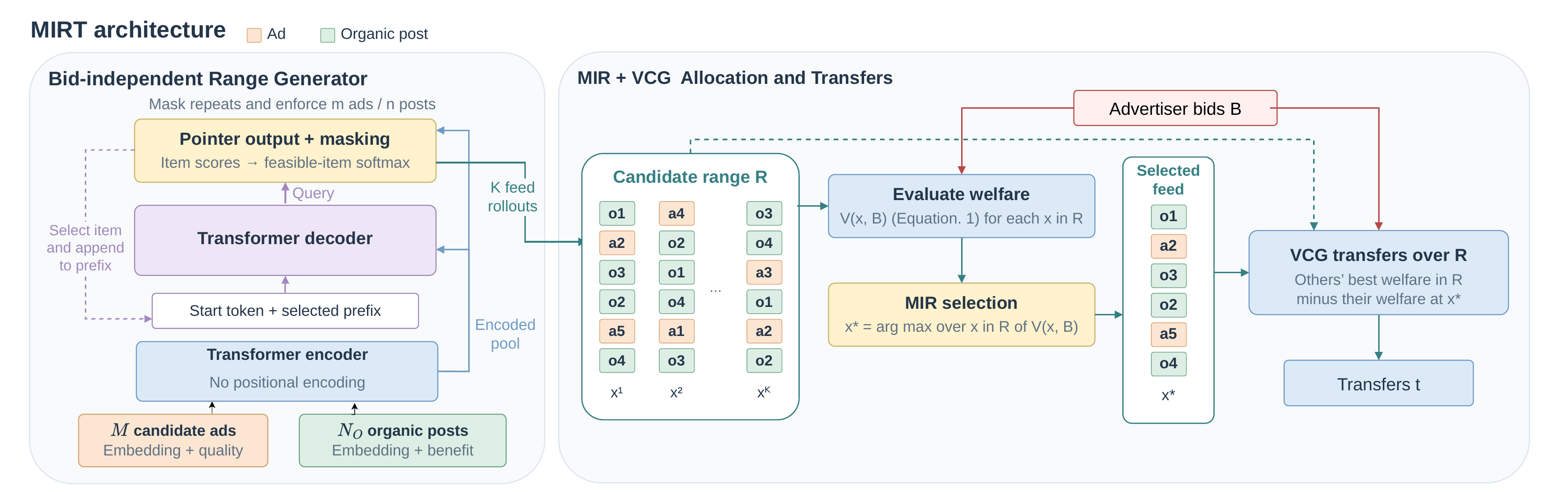}
    \label{fig:alg}
    \vspace{-3mm}
\end{figure}

%% file: sections/30-auctions.tex
\section{Learning Maximal-in-Range Mechanisms}
\label{sec:methods}

The ideal solution to the feed-generation problem would be to run a VCG advertising auction. A VCG mechanism selects a welfare-maximizing allocation out of all possible allocations and calculates transfers for the corresponding allocation. In our advertisement auction setting, the allocation is an ordered selection of $m$ ads and $n$ organic posts out of a pool of $M$ ads and $N_O$ posts, which we call a feed. The number of possible feeds is $\binom{M}{m}\binom{N_O}{n}(m+n)!$, making exhaustive welfare evaluation intractable. Instead, we use a \textit{maximal-in-range (MIR)} mechanism that selects the welfare-maximizing allocation from a smaller set of candidates (range), preserving incentive compatibility as long as the range is generated independently of advertiser bids~\cite{Nisan07Computationally}.

An MIR mechanism is defined by a range $R$ of candidate feeds. This range may depend on the context $C$, including the true ad qualities, organic benefits, item embeddings, and user features, but must be independent of advertiser bids to stay DSIC. Once the range is fixed and bids are observed, the mechanism performs well as long as one element has high performance. Therefore, a good MIR mechanism must predict high quality feeds from the context, and choose a combination of such feeds that covers as well as possible the various bid profiles which may arrive.

Our range generation model with parameters $\theta$ can produce a context dependent range $R_\theta(C)$. Given this range, the mechanism evaluates candidate welfare using the submitted bids, selects the allocation \(a_\theta(B,C)\), and computes VCG transfers over the same range:
$
a_\theta(B,C)\in\arg\max_{x\in R_\theta(C)}V(x,B).
$
Motivated by the sequential nature of feed construction and the prior works' models, we propose an RL-trained encoder-decoder architecture to generate a feed by selecting ads and organic posts autoregressively from a candidate pool of items. The model takes item embeddings, organic post benefits, and ad qualities as inputs and generates one feed per rollout. A collection of $K$ distinct feeds creates the candidate range $R_\theta(C)$ for the MIR mechanism.

\subsection{Neural architecture} 

We use an encoder-decoder transformer \citep{vaswani2017attention} with a pointer output \citep{vinyals2015pointer, kool2019attention} to construct feeds from bid-independent item features (1.38M parameters). 
The encoder has no positional encoding, therefore it is permutation-equivariant \citep{zaheer2017deep} and reordering the candidate pool does not reorder its representations. The pool is encoded once per range and shared across all $K$ rollouts \citep{kwon2020pomo}. The decoder then builds a feed autoregressively, at each step applying causal self-attention over the items chosen so far, cross-attending to the encoded pool, and scoring every remaining candidate with a clipped scaled dot-product pointer \citep{bello2017neural, kool2019attention}. Decoder inputs carry a learned positional encoding \citep{gehring2017convolutional}.

% novelties in our design...
Our design has the following considerations: first, a mask at every decoder step blocks items already placed or the items that would violate the feed's composition, so every rollout is a valid feed by construction rather than by a penalty term. Second, the VCG mechanism needs a diverse menu of feeds both to produce higher welfare and transfer assignments~\citep{guzman2012multiple, grinsztajn2023winner}. In our design, each rollout is conditioned on a per-rollout embedding~\citep{carion2020end, chalumeau2023combinatorial} so that the $K$ rollouts of one context produce $K$ different feeds instead of $K$ copies of similar feeds.

\subsection{Training algorithm}

We train the range generator using on-policy policy-gradient reinforcement learning \citep{williams1992simple}. The policy $\pi_\theta$ constructs each feed item by item per rollout, conditioned on the context $C$, repeated $K$ times to generate the range. To score a range, the environment draws $D=8$ bid profiles from the prior $F$ and takes the welfare of the best feed in the range under each, so the reward is the expected best-of-$K$ welfare, $r(R;C)=\mathbb{E}_{B\sim F}\left[\max_{x\in R}V(x,B)\right]$.

We follow POMO~\citep{kwon2020pomo}, which builds on REINFORCE policy gradient with an entropy bonus and a baseline shared across rollouts of the same instance rather than a learned critic. For each context, we generate $G=8$ ranges for the baseline, force their rollouts to begin with different items so they explore different feeds, and define each range's advantage as its reward minus the mean reward of the other $G-1$ ranges \citep{kool2019buy}, scoring all $G$ on the same $D$ bid profiles. Advantages are standardized and centered across batches. We additionally encourage exploration through forced initial selections and added rollout-diversity vectors. During training, we force distinct initial items for up to $\min(K,N)$ rollouts following \citet{kwon2020pomo} and let the remaining rollouts sample their first item from the policy, including those sampled initial choices. For all training details, including network architecture and optimizer details, see \Cref{app:mirttrainingapp}.

\section{Sample Complexity of MIRT} \label{sec:sample_complexity}

We analyze the statistical generalization of MIRT under hard attention when the range is dynamically generated via beam search. Let $\mathcal{C}$ be the space of contexts and $\mathcal{B}$ the space of bid profiles. An auction instance is defined as a tuple $S = (C, B)$, where $C \in \mathcal{C}$ contains the features of $N$ candidate content items, and $B \in \mathcal{B}$ contains the corresponding bids. 
All detailed proofs can be found in \Cref{app:fullproofs}.

As in \citet{yang2026tightcomplexity}, let $s_\theta$ denote the score computed by an autoregressive transformer without layer normalization, with piecewise-linear activation functions and hard attention, weights $\theta \in \mathbb{R}^W$ and \textit{computational depth} $L$, defined as the total number of sequential attention and feed-forward layers. At each generation step $t \in \{1, \dots, T\}$, the transformer scores a candidate piece of content $a$ given the candidate pool $C$ and the items placed so far. Writing $C(x_{<t})$ for the features of the items in the first $t-1$ slots, 
we denote this score as $s_\theta(a \mid C, x_{<t}).$

MIRT generates a discrete range of candidate feeds $R_\theta(C)$ of cardinality at most $K$ using deterministic beam search of width $K$ over $T$ content slots, with a fixed tie-breaking rule.  
The welfare function evaluated on instance $S$ is
$w(\theta; S) = \max_{x \in R_\theta(C)} V(x, B), $
where $V(x, B)$ is the welfare of feed $x$ under bids $B$.
We define the hypothesis class of welfare functions parameterized by the network as $\mathcal{W} = \{ w(\theta; \cdot) \mid \theta \in \mathbb{R}^W \}$.

First, we state our main theoretical result, \Cref{thm:pdim}:

\begin{theorem}[Pseudo-dimension of MIRT] \label{thm:pdim}
Let $\mathcal{W}$ be the welfare function class induced by the transformer class defined above, with computational depth $L$ and  $W \ge L$ parameters, where beam search of width $K$ generates ranges of feeds of length $T$ from a set of $N$ candidate pieces of content. 
Then,  
\begin{equation}
   \operatorname{Pdim}(\mathcal{W}) 
   = {O} \big( W  L  T \log( TW ) \log (T K N) \big). 
\end{equation}
\end{theorem}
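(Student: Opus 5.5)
The bound will come from counting how many distinct behaviours the parameterized welfare class can exhibit on a fixed sample, in the style of the piecewise-polynomial arguments of Anthony--Bartlett and \citet{yang2026tightcomplexity}. Fix $m$ instances $S_1,\dots,S_m$ and thresholds $r_1,\dots,r_m$. To show they cannot be pseudo-shattered we bound the number of sign patterns of $\big(w(\theta;S_i)-r_i\big)_{i\le m}$ as $\theta$ ranges over $\mathbb{R}^W$ by some quantity $\Pi(m)$. Pseudo-dimension $d$ then forces $2^d\le \Pi(d)$, and solving this inequality gives the stated bound.

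\textbf{Step 1 (one forward pass).} For a fixed context, fixed prefix $x_{<t}$ and fixed candidate $a$, the score $s_\theta(a\mid C,x_{<t})$ is a piecewise-polynomial function of $\theta$. This holds because the network has no layer normalization, piecewise-linear activations and hard attention. We can invoke the region-counting argument of \citet{yang2026tightcomplexity}, as for ReLU networks in Bartlett et al. Each hard-attention argmax and each activation adds polynomial sign conditions, with degree growing by at most one per layer. So over $Q$ evaluated (input, prefix) pairs, $\mathbb{R}^W$ splits into at most $(c\,Q\,L\cdot\text{width})^{O(WL)}$ cells. On each cell, every score is a fixed polynomial of degree at most $L$ in $\theta$.

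\textbf{Step 2 (beam search as a decision tree).} The output of beam search depends on $\theta$ only through the outcomes of pairwise comparisons between scores of (prefix, candidate) pairs, together with the fixed tie-breaking rule. Cumulative beam scores are sums of at most $T$ step scores, so they are still polynomials of degree at most $L$ on each cell. We proceed inductively over $t=1,\dots,T$. At step $t$ the current beams are fixed on the current cell. We then have at most $KN$ candidate extensions, hence at most $(KN)^2$ comparisons per instance. Each comparison is a sign condition on a degree-$L$ polynomial. Counting sign patterns, these refine each cell into at most $(e\,m(KN)^2 L)^{O(W)}$ pieces, and Step 1 must also be refined, since new prefixes introduce new forward passes. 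Compounding over the $T$ steps, every layer of every forward pass along the way contributes a refinement of this kind. This gives a total number of regions of at most $\big(m\,T K N\,W\big)^{O(WLT)}$, on each of which the range $R_\theta(C_i)$ is constant for every $i$. Keeping the exponent at $WLT$ rather than $WLT^2$ or $WLTK$ is the main obstacle. My first attempt would treat the $T$ decoding steps as extra computational depth: a depth-$LT$ piecewise-polynomial computation whose width is multiplied by the beam width and pool size. Then $K$ and $N$ enter only inside the logarithm, as the number of polynomials per layer.

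\textbf{Step 3 (welfare and conclusion).} Once the range is constant, $w(\theta;S_i)=\max_{x\in R}V(x,B_i)$ does not depend on $\theta$ at all. So each sign of $w(\theta;S_i)-r_i$ is constant on each region, and $\Pi(m)\le (mTKNW)^{O(WLT)}$. Setting $2^d\le (dTKNW)^{O(WLT)}$ gives $d=O\big(WLT\log(dTKNW)\big)$. The standard lemma ``$d\le A\log(dB)$ implies $d=O(A\log(AB))$'' then yields $d=O\big(WLT(\log(TW)+\log(TKN))\big)$. This is absorbed into $O\big(WLT\log(TW)\log(TKN)\big)$, using $W\ge L$ so that $\log L\le\log W$.
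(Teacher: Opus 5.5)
Your argument is correct, but it takes a different route from the paper's. The paper never partitions parameter space for the whole beam-search run. Instead it introduces the class $\mathcal H_{\mathrm{cmp}}$ of single pairwise score comparisons and bounds its VC dimension by $O(WL\log(TW))$, using the partition of \citet{yang2026tightcomplexity} plus Warren (\Cref{thm:pairwise-transformer-vc}). It then treats that VC dimension as a black box: it applies Sauer's lemma to the at most $mq$ comparisons made at each step, with $q=\binom{KN}{2}$, and composes over the $T$ steps to get at most $(emq/d_{\mathcal H_{\mathrm{cmp}}})^{d_{\mathcal H_{\mathrm{cmp}}}T}$ range tuples. You instead open up the network and refine a single partition of $\mathbb{R}^W$ step by step, until every hard-attention choice, activation pattern and beam comparison of the entire execution is fixed on each cell.

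The paper's route buys modularity: the beam-search layer needs only a comparison-VC bound, so any scorer with such a bound can be plugged in. Yours buys a sharper result. The $\log(TW)$ from the network and the $\log(TKN)$ from the beam sit inside one logarithm and add, giving $O\big(WLT(\log(TW)+\log(TKN))\big)$, which implies the stated bound. In the paper they multiply, because $d_{\mathcal H_{\mathrm{cmp}}}$, which already carries a logarithm, is placed in the Sauer exponent. Your count would also cover beam search on cumulative scores unchanged, since sums of cell-wise polynomials are polynomials; $\mathcal H_{\mathrm{cmp}}$ only compares single forward-pass scores.

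The obstacle you flag is not a real one, provided the refinement is joint. On each cell the step-$t$ inputs are discrete and fixed, so each step is a fresh depth-$L$ computation with polynomials of degree $O(L)$. Refining all $mKN$ forward passes of that step simultaneously, layer by layer, multiplies the cell count by $\exp\big(O(WL\log(mTKNW))\big)$. Over $T$ steps this gives exponent $WLT$, with $m$, $K$ and $N$ only inside the logarithm. Refining passes separately would put $m$ into the exponent and break the final inversion. A minor point: hidden-state degree grows by a constant rather than by one per layer, and query--key comparisons have roughly twice that degree. Only the $O(L)$ bound matters, and only inside a logarithm.
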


\Cref{thm:pdim} enables us to bound the sample complexity of learning an almost-optimal member of the MIRT class using the standard pseudo-dimension uniform-convergence guarantee of \citet{DBLP:journals/jcss/LiLS01}:

\begin{theorem}[Sample complexity of MIRT]
\label{cor:sample-complexity}
Let $V_{\max}$ be the maximum possible welfare. 
Then, for any distribution over auction instances $\mathcal{D}$, the sample complexity of learning with probability at least 
$1 - \delta$ a parametrization of $\mathcal{W}$ with expected welfare within $\epsilon$ of the optimal achievable by the class is: 

\begin{equation}
S = O\left(
\frac{V_{\max}^2}{\epsilon^2}
\left (
WLT
\log(TW)
\log(TKN)
\log\left(\frac{V_{\max}}{\epsilon}\right)
+
\log\left(\frac{1}{\delta}\right)
\right )
\right).
\end{equation}
\end{theorem}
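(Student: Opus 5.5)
The statement follows by combining \Cref{thm:pdim} with the uniform-convergence bound for real-valued classes of bounded pseudo-dimension \citep{DBLP:journals/jcss/LiLS01}, applied to empirical welfare maximization over $\mathcal{W}$.

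First we make sure the hypotheses of that bound hold. Each $w(\theta;S)=\max_{x\in R_\theta(C)}V(x,B)$ is a maximum of nonnegative welfares. Assuming bounded bids, CTRs and externality multipliers, each is at most $V_{\max}$. So every function in $\mathcal{W}$ maps instances into $[0,V_{\max}]$. We rescale by $V_{\max}$ to obtain a class $\mathcal{W}/V_{\max}$ with range $[0,1]$. Rescaling leaves the pseudo-dimension unchanged, so $d:=\operatorname{Pdim}(\mathcal{W}/V_{\max})=O(WLT\log(TW)\log(TKN))$ by \Cref{thm:pdim}.

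Next we invoke the result of \citet{DBLP:journals/jcss/LiLS01}. For a $[0,1]$-valued class of pseudo-dimension $d$, a sample of size
\[
m = O\!\left(\frac{1}{\epsilon'^2}\left(d\log\frac{1}{\epsilon'}+\log\frac{1}{\delta}\right)\right)
\]
gives the following guarantee with probability at least $1-\delta$: $\bigl|\frac1m\sum_{i}f(S_i)-\mathbb{E}_{S\sim\mathcal{D}}f(S)\bigr|\le\epsilon'$ simultaneously for all $f$ in the class. We set $\epsilon'=\epsilon/(2V_{\max})$ and rescale back. The empirical welfare of every $\theta$ is then within $\epsilon/2$ of its expected welfare.

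Now let $\hat\theta$ maximize empirical welfare, and let $\theta^\star$ attain (or come within an arbitrarily small slack of) the supremum of expected welfare over the class. The standard chain gives
\[
\mathbb{E}\,w(\hat\theta)\ge \hat{\mathbb{E}}\,w(\hat\theta)-\epsilon/2\ge\hat{\mathbb{E}}\,w(\theta^\star)-\epsilon/2\ge\mathbb{E}\,w(\theta^\star)-\epsilon.
\]
The first and last inequalities use uniform convergence, and the middle one uses the empirical optimality of $\hat\theta$. Substituting $\epsilon'$ and $d$ into the bound for $m$ produces the factor $V_{\max}^2/\epsilon^2$, the term $d\log(V_{\max}/\epsilon)$ (absorbing constants), and the additive $\log(1/\delta)$. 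This is exactly the displayed $S$.

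The only step needing real care is the boundedness assumption. The theorem asserts that a finite $V_{\max}$ exists, and this requires bids, $h_o$ and the $\gamma_j$ to be bounded; we take that as implicit in defining $V_{\max}$. Everything else is bookkeeping of constants and logarithms. If the cited result is instead stated in terms of covering numbers, we would first bound the $L_1$ covering number by $(O(1/\epsilon'))^{O(d)}$ via Haussler's bound for pseudo-dimension. This yields the same expression.
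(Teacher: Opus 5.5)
Your proposal is correct and takes the same route as the paper, which gives no separate argument beyond invoking \Cref{thm:pdim} together with the pseudo-dimension uniform-convergence guarantee of \citet{DBLP:journals/jcss/LiLS01}; your rescaling to $[0,1]$, the choice $\epsilon'=\epsilon/(2V_{\max})$, and the empirical-welfare-maximization chain fill in the bookkeeping the paper leaves implicit. You are also right that a finite $V_{\max}$ (bounded bids, $h_o$, and externality multipliers) is an implicit assumption: the paper's exponential, lognormal, and Pareto bid priors are unbounded.
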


\paragraph{Proof Idea.} To bound $\text{Pdim}(\mathcal{W})$, we must bound the maximum number of auction instances $m$ that $\mathcal{W}$  can shatter. The proof proceeds in five main steps:

\begin{enumerate}
\item \textbf{Reducing from bounding welfare cardinality to bounding range cardinality (\Cref{lemma:bottleneck}):}
The Transformer parameters $\theta$ affect welfare only through the generated range $R_\theta(C)$. In particular, for any fixed auction instance, two parameter vectors that generate the same range yield exactly the same welfare. 
Thus, the number of distinct welfare-threshold labelings that \(\mathcal W\) can generate on any fixed collection of auction instances is upper bounded by the number of distinct range tuples it can generate.
\item \textbf{Characterizing range generation through pairwise comparisons (\Cref{lemma:comparisons}):}
Beam search can be viewed as a sequence of branching decisions. At each autoregressive step, the next beam state is determined by pairwise comparisons between the Transformer scores of at most $KN$ candidate continuations. Hence, conditional on the current beam state, at most
$q=\binom{KN}{2}=O(K^2N^2)$
pairwise comparisons determine the next beam.
\item \textbf{Bounding the number of realizable range tuples by the VC dimension of the Transformer binary comparison class (\Cref{lemma:growth}):}
Let $\mathcal{H}_{\mathrm{cmp}}$ denote the binary hypothesis class induced by these pairwise Transformer-score comparisons, with VC dimension $d_{\mathcal{H}_{\mathrm{cmp}}}$. 

Conditional on any realizable beam-state history, the comparison inputs at the next generation step are fixed. Thus, we can apply the growth function of $\mathcal{H}_{\mathrm{cmp}}$ sequentially across the $T$ generation steps. Sauer's lemma \citep{SAUER1972145} then bounds the number of distinct range tuples that can be generated on $m$ fixed contexts as a function of the VC dimension of $\mathcal{H}_{\mathrm{cmp}}$.

\item \textbf{Bounding the VC dimension of the binary comparison class (\Cref{thm:pairwise-transformer-vc}):}
We adapt the parameter-space partitioning argument of \citet{yang2026tightcomplexity}. 
For any fixed collection of inputs, \citet{yang2026tightcomplexity} showed that the parameter space can be partitioned into regions on which all Transformer branching decisions are fixed and the resulting Transformer scores are polynomial in the parameters of degree $O(L)$ (\Cref{lem:pairwise-partition}). 
Thus, within each such region, every pairwise score comparison is the sign of a difference of two degree-$O(L)$ polynomials, which is itself a polynomial of degree $O(L)$. Warren's theorem \citep{Warren1968LowerBF} bounds the number of realizable comparison patterns (\Cref{lem:pairwise-local-patterns}), yielding
$d_{\mathcal H_{\mathrm{cmp}}} = O\!\left( WL\log(TW) \right)$.

\item \textbf{From range growth to pseudo-dimension (\Cref{thm:pdim}):}
Pseudo-shattering $m$ auction instances require at least $2^m$ distinct range tuples. Comparing this requirement with the range-growth bound gives
$\operatorname{Pdim}(\mathcal W)
=
O\!\left(
d_{\mathcal H_{\mathrm{cmp}}}
T\log(TKN)
\right).$ Substituting the comparison-class VC-dimension bound yields
$\operatorname{Pdim}(\mathcal W)
=
O\!\left(
WLT\log(TW)\log(TKN)
\right).$
\end{enumerate}

%% file: sections/40-experiments.tex
\section{Experiments}
\label{sec:experiments}

In this section we present numerical experiments that establish the effectiveness of our proposed MIRT framework compared to baseline mechanisms, including previous methods~\citep{cga2024,nga2025}.

\subsection{Default Synthetic Environment}
\label{subsec:environment}

To sample our organic posts and ad embeddings, we consider 50 topic centroids and we sample the ads and posts around those. The default environment consists of 80 sampled ads and 300 sampled organic posts per centroid, therefore 15,000 posts and 4,000 ads. This environment contains 50,000 users, and a pool of $M=10$ ads and $N_O=18$ posts per user. Therefore, the problem size is \((M,N_O,T)=(10,18,24)\) with a fixed feed length of 24 and a fixed six ads per feed. In this setting, ad qualities follow a lognormal distribution with the cosine similarity between user-item embeddings as the log-mean and a log-standard deviation of $0.5$. Organic utilities follow a Weibull distribution with a mean of exponentiated user-item cosine similarity and a shape of $0.6$. (Appendix Figure \ref{fig:welfare_distributions}) The default bid prior is exponential with rate 1 and no shift: (mean 1, standard deviation 1). Details on synthesizing the environment can be found in Appendix \ref{sec:env_setup}.
Additionally, the default environment reduces the impression weight by $0.02$ with each position down the feed and also reduces the conversion weight for ads by $0.01$. Finally, the externality function for this environment takes into account the similarity between the ad and its previous posts, also introduced as Position-modulated externality in Appendix~\ref{app:externalityfunctions} with $\eta_{\mathrm{above}}=0.5$ and $\eta_{\mathrm{below}}=0$.

\subsection{Experimental Setup}

We designed our experiments to explore our method's robustness with respect to different problem sizes, bid distributions, and externality functions.

\paragraph{Environment variants.} Starting from the default environment (\((M,N_O,T)=(10,18,24)\)), we vary one component at a time. We experiment with Exponential$(1)$ (mean $1$), Uniform$(0.001, 4)$ (mean $\approx 2$), LogNormal$(\mu{=}0, \sigma{=}0.5)$ (mean $e^{0.125} \approx 1.13$), and Pareto$(\alpha{=}1.5, x_m{=}1)$ (mean $3$, infinite variance) distributions. 
We train models for each externality model, and various problem sizes by considering different numbers of candidate ads, organic posts, and feed lengths.
For the additive, activation, attention, multiplicative, and topic externalities, we use a neighborhood radius of $w=4$, covering organic posts up to four positions before or after the ad, and an externality strength of $\eta=0.5$. 
For more details on the externality functions, please see Appendix \Cref{tab:externality_params}.

\paragraph{MIRT variants.}
We evaluate MIRT using ancestral sampling with range sizes $K\in\{32,64,128,256,512\}$, (denoted MIRT $K=.$). Every trained model, although trained on $K=32$, is capable of generating any range size. MIRT range generation is independent of bids, so it is exactly DSIC. However, varying the bid distributions, problem sizes, and externality modelings will require separate model training.

\paragraph{Training details.} Among the 50,000 dataset samples, we used 40,000 for training, 7,500 for testing and 2,500 for validation. The training hyperparameters follow the description in \Cref{sec:methods}, with $K=32$.  The ancestral decoding temperature defaults to 1 and, where mentioned, is tuned on a validation set (see \Cref{fig:temperatureanalysis} for the effect of temperature).

\subsection{Baselines and evaluation methods}

\paragraph{MIR baselines.} 

As our main baseline, we consider Random+MIR, which builds a bid-independent range of $K=32$ feeds, each drawn uniformly at random from the feasible feeds. MIR then selects the feed with the highest welfare at the reported bids.

We also consider a bid-blind greedy method to generate the range (Greedy+MIR). Greedy+MIR scores each item by its bid-independent welfare, meaning the welfare generated by an ad's quality and a post's user benefit, and not the bids. It then builds a feed by keeping the highest-scoring feasible set of items and placing the best ones in the most prominent slots. The first feed follows this ranking exactly. The remaining $K-1$ feeds perturb the scores with random noise, so the range holds $K$ distinct high-quality feeds.

\paragraph{Comparisons to prior work.}

We re-implement modified versions of CGA and NGA~\citep{cga2024,nga2025}, adapted to our setting (see Appendix \ref{app:cgangatraining} for details). These methods have terms in the loss function to encourage incentive compatibility during training, but do not guarantee exact DSIC. CGA and NGA regret hyperparameters ($\rho$, the strength of regret penalty) were tuned so they produce lower regret values to become closer to our zero regret setting. We found that additional regret minimization lowered their performance. It is possible that different hyperparameters could unlock improved performance with low regret; on the other hand, it is also possible that the empirical method for estimating regret fails to detect some misreports. We see these ambiguities as a fundamental limitation of approximately-strategyproof methods for neural mechanism design.

\paragraph{Bid-aware baselines.}
To construct a strong welfare benchmark for this setting, we used bid-aware beam search. This method does not generate a range for MIR, but by calculating the welfare produced by every item (including actual bids in the ad welfare), it produces a single high welfare feed with beam-search algorithms with 4 and 8 width. This method is a strong welfare benchmark, but because it cannot be combined with an appropriate transfer rule, it cannot be interpreted as a mechanism.

\paragraph{Performance evaluation.}

We compare MIRT variants with Random+MIR, Greedy+MIR, CGA, NGA, and bid-aware greedy and beam-search benchmarks. 
We evaluate MIRT using ancestral sampling. We report organic welfare, advertiser welfare, (total) social welfare, total transfer, and empirical regret, which we denote $\Psi$. Regret is the additional utility an advertiser could have gained by misreporting their bid, measured after the bids are known~\cite{dutting2024optimal}. When comparing all other methods, we report either absolute metrics or lift over Random+MIR.

\begin{figure}[t]
    \centering
    \caption{\textbf{Performance in the reference environment.} Box-plot with medians/spread for test contexts for one trained model. The first two panels show per-context percentage lift over Random+MIR in total welfare and total transfer. The final panel shows ex-post regret as a percentage. MIRT used sampling temperature 2.4, and CGA/NGA used $\rho=1000.$ MIRT achieves higher total welfare than CGA and NGA, and has zero regret by construction.}
    \includegraphics[width=0.98\linewidth]{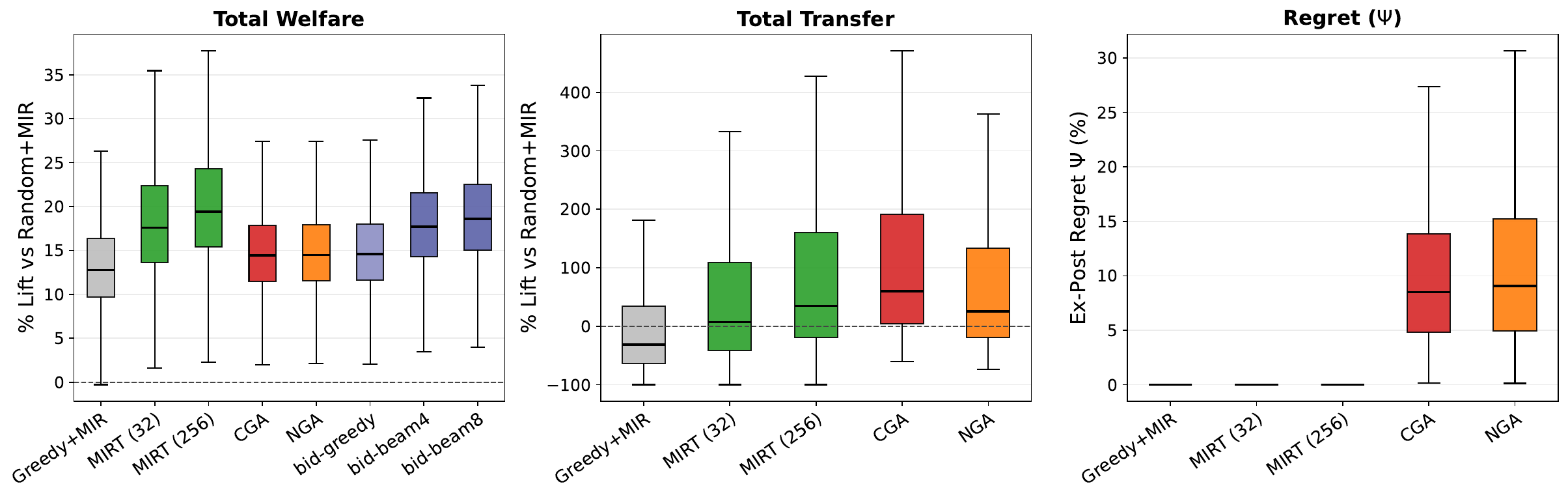}
    \label{fig:toplines}
\end{figure}

% \mjc{i removed "Results" as a section"}

\subsection{Main results for reference configuration}
\label{subsec:mainresults}
We first compare all methods under the reference configuration described above. We evaluate social welfare, total transfer, and incentive properties. Results are shown in \Cref{fig:toplines} and Appendix~\Cref{tab:absolute_experiment1} includes more detail. \Cref{app:virtualwelfare} also contains experiments training for the modified objective of virtual welfare (to maximize total transfer). 
For our main objective of total welfare, MIRT outperforms all other mechanisms, including our Greedy+MIR baseline, and crucially, the prior SOTA approaches in the literature, CGA and NGA.
For total transfer, MIRT outperforms the Greedy+MIR baseline, while being marginally below CGA, and on par with NGA. 
Importantly, MIRT achieves this performance while being exactly strategyproof by construction. By comparison, CGA and NGA exhibit a very high amount of regret.

\subsection{Varying externality functions}
\label{sec:externality-experiments}

\begin{table}[htbp]
\centering
\caption{Mean ($\pm$ approximate 95\% CI) absolute total welfare across held-out contexts under each externality function with default problem size and bid distribution. Highlighted cells are the highest-performing mechanisms. Corresponding total transfer is in Appendix~\Cref{tab:revenuefromorigtable}. This table was generated with 40 epochs of training for each MIRT model, and sampling temperature 1 (except for the Prev. Organic externality, which used 2.4). CGA/NGA used $\rho=1.$}
\label{tab:externality}
\resizebox{\linewidth}{!}{
\begin{tabular}{l *{7}{c}}
\toprule
Externality & Prev. Organic & Additive & Activation & Attention & Multiplicative & Topic & Position \\
\midrule
Random+MIR & $27.18 \pm 0.214$ & $27.34 \pm 0.215$ & $29.67 \pm 0.227$ & $28.50 \pm 0.220$ & $27.49 \pm 0.216$ & $27.18 \pm 0.214$ & $27.33 \pm 0.214$ \\
Greedy+MIR & $30.71 \pm 0.239$ & $30.83 \pm 0.240$ & $33.24 \pm 0.255$ & $31.93 \pm 0.245$ & $30.91 \pm 0.242$ & $30.63 \pm 0.237$ & $30.93 \pm 0.241$ \\
MIRT $(K = 32)$ & $32.11 \pm 0.245$ & $31.80 \pm 0.243$ & $33.92 \pm 0.257$ & $32.41 \pm 0.247$ & $33.00 \pm 0.254$ & $31.87 \pm 0.247$ & $32.86 \pm 0.250$ \\
MIRT $(K = 256)$ & \hlg{$32.50 \pm 0.246$} & \hlg{$32.20 \pm 0.245$} & \hlg{$34.47 \pm 0.258$} & \hlg{$32.86 \pm 0.248$} & \hlg{$33.49 \pm 0.256$} & \hlg{$32.34 \pm 0.249$} & \hlg{$33.30 \pm 0.252$} \\
CGA & $31.08 \pm 0.238$ & $31.37 \pm 0.241$ & $33.55 \pm 0.254$ & $32.56 \pm 0.247$ & $31.35 \pm 0.241$ & $31.20 \pm 0.239$ & $31.15 \pm 0.239$ \\
NGA & $31.09 \pm 0.238$ & $31.33 \pm 0.240$ & $33.51 \pm 0.254$ & $32.59 \pm 0.247$ & $31.37 \pm 0.241$ & $31.23 \pm 0.239$ & $31.16 \pm 0.239$ \\
\midrule
bid-greedy & $31.12 \pm 0.238$ & $31.36 \pm 0.241$ & $33.54 \pm 0.254$ & $32.61 \pm 0.247$ & $31.39 \pm 0.241$ & $31.23 \pm 0.239$ & $31.19 \pm 0.239$ \\
bid-beam4 & $31.96 \pm 0.243$ & $32.29 \pm 0.244$ & $35.52 \pm 0.262$ & $33.47 \pm 0.251$ & $32.54 \pm 0.249$ & $32.20 \pm 0.247$ & $32.44 \pm 0.246$ \\
bid-beam8 & $32.20 \pm 0.244$ & $32.42 \pm 0.245$ & $35.58 \pm 0.262$ & $33.53 \pm 0.251$ & $32.78 \pm 0.250$ & $32.35 \pm 0.247$ & $32.77 \pm 0.248$ \\
\bottomrule
\end{tabular}
}
\end{table}

We examine whether MIRT's welfare performance is robust to different forms of permutation externalities, as described in Appendix \ref{app:externalityfunctions}.
We vary the externality function while holding all other experimental parameters fixed. Results are in \Cref{tab:externality}. 
For our main objective of welfare, MIRT outperforms all other mechanisms, for all externality functions. 
For total transfer (Appendix~\Cref{tab:revenuefromorigtable}), the best-performing mechanism depends on the externality function. However, again, MIRT has zero regret by construction, while the other mechanisms exhibit high amounts of regret.

\subsection{Varying bid distributions}
\label{sec:bid-distribution-experiments}

In \Cref{tab:bid_dist}, we evaluate the robustness of MIRT's performance to different advertiser populations by varying the distribution of advertiser bids. 
Again, for our central objective of Social Welfare, MIRT outperforms all other mechanisms by a wide margin, even at $K = 32$. 
For uniform, lognormal, and Pareto distributions, it even outperforms all other mechanisms in terms of total transfer.

\begin{table}[t]
\centering
\caption{Mean ($\pm$ CI across held-out contexts) absolute total welfare and total transfer across held-out contexts under each bid prior and default problem size and externality. Highlighted cells are the highest-performing mechanisms. Results in this table were generated with 40 epochs of training for each MIRT model, with temperature $2$. CGA/NGA have $\rho = 1000$.}
\resizebox{\linewidth}{!}{%
\begin{tabular}{lcccccccc}
\toprule
& \multicolumn{4}{c}{Social Welfare} & \multicolumn{4}{c}{Total Transfer} \\
\cmidrule(lr){2-5} \cmidrule(lr){6-9}
Distribution & Exponential & Uniform & Lognormal & Pareto & Exponential & Uniform & Lognormal & Pareto \\
\midrule
Random+MIR & $27.15 \pm 0.214$ & $34.27 \pm 0.222$ & $27.33 \pm 0.205$ & $44.86 \pm 0.761$ & $2.05 \pm 0.039$ & $4.83 \pm 0.073$ & $2.47 \pm 0.040$ & $5.60 \pm 0.093$ \\
Greedy+MIR & $30.71 \pm 0.240$ & $39.83 \pm 0.257$ & $31.38 \pm 0.227$ & $50.59 \pm 0.781$ & $1.44 \pm 0.023$ & $3.29 \pm 0.048$ & $1.93 \pm 0.026$ & $3.72 \pm 0.058$ \\
MIRT $(K = 32)$ & $32.06 \pm 0.245$ & $42.67 \pm 0.268$ & $32.69 \pm 0.230$ & $53.49 \pm 0.882$ & $2.23 \pm 0.035$ & $5.55 \pm 0.075$ & $2.76 \pm 0.032$ & $5.05 \pm 0.073$ \\
MIRT $(K = 256)$ & \hlg{$32.52 \pm 0.246$} & \hlg{$43.32 \pm 0.269$} & \hlg{$32.95 \pm 0.231$} & \hlg{$55.02 \pm 0.916$} & $2.76 \pm 0.037$ & \hlg{$7.14 \pm 0.081$} & \hlg{$3.75 \pm 0.036$} & \hlg{$6.64 \pm 0.076$} \\
CGA & $31.08 \pm 0.238$ & $40.37 \pm 0.252$ & $31.19 \pm 0.226$ & $51.00 \pm 0.767$ & \hlg{$3.26 \pm 0.024$} & $6.35 \pm 0.024$ & $3.29 \pm 0.011$ & $3.25 \pm 0.015$ \\
NGA & $31.09 \pm 0.238$ & $40.31 \pm 0.252$ & $31.20 \pm 0.226$ & $50.98 \pm 0.766$ & $2.50 \pm 0.016$ & $5.97 \pm 0.022$ & $2.90 \pm 0.009$ & $2.22 \pm 0.014$ \\
\midrule
bid-greedy & $31.12 \pm 0.238$ & $40.39 \pm 0.252$ & $31.22 \pm 0.226$ & $51.02 \pm 0.766$ & -- & -- & -- & -- \\
bid-beam4 & $31.96 \pm 0.243$ & $41.50 \pm 0.260$ & $32.06 \pm 0.229$ & $52.73 \pm 0.813$ & -- & -- & -- & -- \\
bid-beam8 & $32.20 \pm 0.244$ & $41.87 \pm 0.262$ & $32.27 \pm 0.230$ & $53.55 \pm 0.839$ & -- & -- & -- & -- \\
\bottomrule
\end{tabular}}
\label{tab:bid_dist}
\end{table}
\subsection{Varying range size}
\label{sec:range-size-experiments}
MIR mechanism welfare improves as more items are added to the range; when the range is large enough to include all feeds, exact VCG is recovered. In \Cref{fig:range_size_scaling}, we show the user welfare, advertiser welfare, total welfare, and total transfer in the reference environment as we vary the range over \(K\in\{32,64,128,256,512\}\), along with latency. MIRT ran with decoding temperature 2. Because MIRT runs in parallel on the GPU, its latency does not increase with the range size. Unsurprisingly, performance in both welfare and total transfer improves as the size of the range increases.
\begin{figure}[h]
    \centering
    \caption{Scaling of performance with the size of the range.}
    \includegraphics[width=0.98\linewidth]{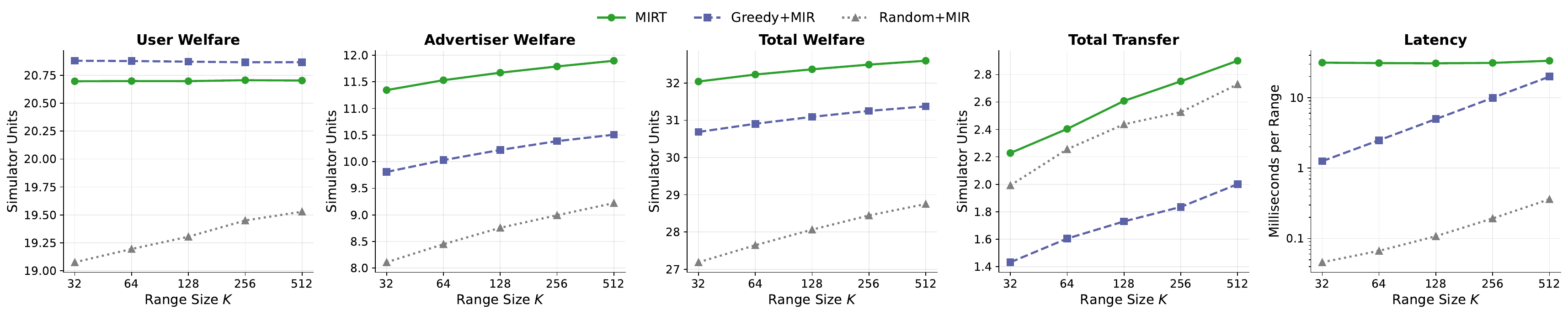}
    \label{fig:range_size_scaling}
    \vspace{-3mm}
\end{figure}

%% file: sections/50-conclusion.tex
\section{Discussion and Conclusion}
\label{sec:conclusion}

We present MIRT, an exactly strategyproof mechanism for whole-feed optimization. MIRT uses an autoregressive transformer to generate the feeds in the range of a maximal-in-range mechanism, optimized via a reinforcement learning signal that takes into account both individual feed and collective range quality. We show that with hard-attention transformers, near-optimal expected welfare is learnable with sample complexity polynomial in the transformer size and only logarithmic in the range width. 
In extensive experiments, we show that MIRT outperforms prior state-of-the-art approaches in welfare across all settings tested, and remains competitive in total transfer. 
At the same time, these prior approaches for joint ranking show high strategyproofness violations in our experiments, while MIRT ensures exact strategyproofness by construction. There are many promising directions for future work, including testing improved RL algorithms, extending our methodology to broader classes of mechanisms such as affine maximizers, and applying the same basic approach to many other mechanism design problems beyond feed optimization.

\clearpage
\newpage
\subsection*{AI use statement}

We used generative AI in several ways in our paper. We had AI tools review our draft and provide suggestions for improved language as well as high-level suggestions about which parts of our argument were more or less convincing.  We used AI as a search tool in a human-driven literature search to surface papers we might not have been aware of, in addition to related work found manually or already known to the authors. We used AI coding assistants extensively for our code, which includes matplotlib code to generate plots.
The proof in \Cref{sec:sample_complexity} is primarily human work, but conversational AI assistance was used to suggest some counting lemmas to tighten the bound, as well as to edit the presentation of the proof for clarity and correctness.

%% file: sections/60-appendix.tex
\appendix
\section{Related work: neural feed allocation mechanisms}
\label{app:neuralfeed}
There are several works using neural networks to learn feed-allocation auction mechanisms, including \citet{genauction2025,egav2_2025}. We highlight two approaches of particular relevance to this work, CGA~\citep{cga2024} and NGA~\citep{nga2025}, both of which consider feed generation in the presence of externalities, and both of which use RL to train their feed generators.
\citet{cga2024} employs an autoregressive decoder that takes context embeddings as input and directly generates ad allocations to maximize expected virtual welfare, taking into account externalities. Their objective is to maximize virtual welfare; the virtual-welfare-maximizing allocation rule has corresponding strategyproof Myerson transfers. However, CGA does not learn the exact optimal allocation rule, and moreover, computing Myerson transfers (via numerical integration) would be infeasibly slow. Therefore, they use a decoupled PaymentNet module to approximate correct transfers, and empirically minimize regret in the fashion of \citet{dutting2024optimal}. The result is an auction that performs well, but for which exact strategyproofness cannot be ensured.
\citet{nga2025} introduces a non-autoregressive feed generator with dual encoders for candidate items and candidate positions, and an evaluator that outputs predicted click-through and conversion rates, along with transfers, which are used to select a final feed.  The networks are also trained using an approach similar to \citet{dutting2024optimal} to encourage incentive compatibility. Our approach also uses transformers trained via RL to generate feeds, but we achieve exact strategyproofness and generate our entire range of feeds simultaneously.

\section{Externality functions}
\label{app:externalityfunctions}

Externality models ($\gamma_j(x,p)$) describe the interaction between ads and surrounding posts. In the following list, we present the different externality modeling functions.

Externalities are defined for a position $p$, in a feed $x$, filled with advertisement $a_j$. The ad $a_j$ has embedding $e_j$, and a surrounding organic post $o$ has embedding $e_o$. We write $s_{jo}=\cos(e_j,e_o)$ for their similarity, and $\kappa_x(o,p)=\lambda^{|\,p_x(o)-p\,|}$ for the decay factor that shrinks as $o$ gets further from $p$. Most of these models have a strength hyperparameter $\eta$ that defines the intensity of the interactions.

    \paragraph{Additive similarity.} This method sums up the nearby organic post embeddings with a decay factor based on distance from position $p$, over the window of slots $p-w$ to $p+w$:
     \[
        \gamma_j(x,p)
        =
        1+\eta\,
        \cos\!\Big(e_j,\ \sum_{|\mathrm{pos}_x(o)-p|\,\leq\, w}\kappa_x(o,p)\,e_o\Big),
        \qquad \eta\geq 0 ,
    \]

    \textbf{Neighborhood activation.}
    The ad receives a fixed boost whenever at least one organic post in slots $p-w$ to $p+w$ is
    sufficiently similar to it, where $s_{jo}=\cos(e_j,e_o)$. 
    \[
    \gamma_j(x,p)
    =
    1+\eta\,
    \mathbf 1
    \left\{
    \max_{|\mathrm{pos}_x(o)-p|\,\leq\, w}s_{jo}
    >
    \tau_{\mathrm{act}}
    \right\}.
    \]

    \textbf{Softmax attention.}
    The ad attends to all nearby organic posts, with more relevant posts
    receiving greater weight. Define
    \[
    \alpha_{jo}(x,p)
    =
    \frac{
        \kappa_x(o,p)\exp\!\left(\langle e_j,e_o\rangle/\tau_{\mathrm{sm}}\right)
    }{
        \displaystyle
        \sum_{|\mathrm{pos}_x(o')-p|\,\leq\, w}
        \kappa_x(o',p)\exp\!\left(\langle e_j,e_{o'}\rangle/\tau_{\mathrm{sm}}\right)
    },
    \qquad
    \tau_{\mathrm{sm}}>0 .
    \]
    where $\langle e_j,e_o\rangle$ is the unnormalized inner product of the two embeddings. The externality multiplier is:
    \[
    \gamma_j(x,p)
    =
    1+\eta
    \sum_{|\mathrm{pos}_x(o)-p|\,\leq\, w}
    \alpha_{jo}(x,p)\,s_{jo}.
    \]

    \textbf{Multiplicative complements and substitutes.}
    Each organic post in slots $p-w$ to $p+w$ independently scales the ad's CTR, so the effects
    compound instead of averaging:
    \[
    \gamma_j(x,p)
    =
    \prod_{|\mathrm{pos}_x(o)-p|\,\leq\, w}
    \left(
    1+\eta\,\kappa_x(o,p)\,s_{jo}
    \right).
    \]
    Positive \(s_{jo}\) represents complementary content, whereas negative \(s_{jo}\) represents
    substitute or competing content.

    \textbf{Topic-category matching.}
    Affinity is measured in a shared space of $K$ latent topics rather than in raw embedding space.
    Let $\phi_{j,k}$ denote ad $j$'s affinity for topic $k$ and $\psi_{o,k}$ organic post $o$'s
    loading on topic $k$, both distributions over the $K$ topics. The surrounding context is the
    weighted mixture of its posts' loadings,
    \[
    \bar\psi_k(x,p)
    =
    \frac{
        \sum_{|\mathrm{pos}_x(o)-p|\,\leq\, w}\kappa_x(o,p)\,\psi_{o,k}
    }{
        \sum_{|\mathrm{pos}_x(o)-p|\,\leq\, w}\kappa_x(o,p)
    },
    \]
    and the ad benefits when its preferred topics are the ones that the mixture is rich in:
    \[
    \gamma_j(x,p)
    =
    1+\eta
    \left(
    K\sum_{k=1}^{K}\phi_{j,k}\,\bar\psi_k(x,p)
    \;-\;1
    \right).
    \]

    \textbf{Position-modulated.}
    
    Writing $o_{p-1}$ and
    $o_{p+1}$ for the items in the slots directly above and below $p$,
    \[
    \gamma_j(x,p)
    =
    1+\eta_{\mathrm{above}}\,s_{j o_{p-1}}
    +\eta_{\mathrm{below}}\,s_{j o_{p+1}},
    \qquad
    \eta_{\mathrm{above}}>\eta_{\mathrm{below}} ,
    \]
    where a term contributes $0$ if that slot holds an ad or falls outside the feed.

\begin{table}[h]
\centering
\begin{tabular}{ll}
\toprule
\textbf{Externality function} & \textbf{Parameter values} \\
\midrule
Additive Similarity   & $\eta = 0.5$, $\lambda = 0.7$, $w = 4$ \\
Neighborhood Activation & $\eta = 0.5$, $\tau_{\mathrm{act}} = 0.3$, $w = 4$ \\
Softmax Attention & $\eta = 0.5$, $\tau_{\mathrm{sm}} = 0.5$, $\lambda = 0.7$, $w = 4$ \\
Multiplicative interactions    & $\eta = 0.5$, $\lambda = 0.7$, $w = 4$ \\
Topic affinity                 & $\eta = 0.5$, $\lambda = 0.7$, $w = 4$, $K = 20$ topics \\
Position-modulated             & $\eta_{\mathrm{above}} = 0.6$, $\eta_{\mathrm{below}} = 0.2$ \\
One side Position-modulated & $\eta_{\mathrm{above}} = 0.5$  (Default Externality)\\
\bottomrule
\end{tabular}
\caption{Default parameter values used in all experiments.}
\label{tab:externality_params}
\end{table}

\section{Proofs From \Cref{sec:sample_complexity}}
\label{app:fullproofs}
In this section, we provide the remaining proof steps to bound the sample complexity of MIRT. 

\begin{lemma} \label{lemma:bottleneck}
For any two parameter vectors $\theta_1, \theta_2 \in \mathbb{R}^W$ and any auction instance $S = (C, B)$, if $R_{\theta_1}(C) = R_{\theta_2}(C)$, then $w(\theta_1; S) = w(\theta_2; S)$.
\end{lemma}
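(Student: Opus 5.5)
The plan is to unfold the definition of the welfare function and observe that it depends on $\theta$ only through the generated range. By definition, for an instance $S=(C,B)$ we have $w(\theta;S)=\max_{x\in R_\theta(C)} V(x,B)$. The function $V(x,B)$ in \eqref{eq:welfare} is determined entirely by the feed $x$, the bids $B$, and the fixed context quantities: the engagement rates $q_i(x)$, the baseline CTRs $\rho_j$, the externality multipliers $\gamma_j$, and the organic benefits $h_o$. None of these involve the transformer parameters. Consequently, for a fixed instance $S$, the map $R\mapsto \max_{x\in R}V(x,B)$ is a well-defined function $\Phi_S$ on finite nonempty sets of feeds, and we can write $w(\theta;S)=\Phi_S(R_\theta(C))$.

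The key steps, in order, are as follows.

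First, fix $S=(C,B)$ and note that $V(\cdot,B)$ is a deterministic real-valued function on $\mathcal X$ that does not depend on $\theta$.

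Second, check that the maximum is well defined. Beam search of width $K$ with a fixed tie-breaking rule returns a finite, nonempty set of at most $K$ feeds, so the maximum of $V(\cdot,B)$ over this set exists. Note that the MIR allocation $a_\theta(B,C)$ may depend on tie-breaking among maximizers, but the maximal value does not.

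Third, substitute the hypothesis $R_{\theta_1}(C)=R_{\theta_2}(C)$. This gives $w(\theta_1;S)=\Phi_S(R_{\theta_1}(C))=\Phi_S(R_{\theta_2}(C))=w(\theta_2;S)$.

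There is essentially no obstacle; the argument is a composition-of-functions observation. The only point needing care is to confirm that nothing in $V$ secretly depends on $\theta$. In particular, the externality model and CTRs must be environment primitives rather than outputs of the network, which holds in the model of \Cref{sec:model_and_preliminaries}. One should also treat $R_\theta(C)$ as a set, since the max is invariant to the ordering or multiplicity of beam outputs, so equality as sets suffices.
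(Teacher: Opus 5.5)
Your proposal is correct and matches the paper's proof, which is exactly the chain $w(\theta_1;S)=\max_{x\in R_{\theta_1}(C)}V(x,B)=\max_{x\in R_{\theta_2}(C)}V(x,B)=w(\theta_2;S)$, obtained by unfolding the definition of $w$ and substituting the hypothesis. Your extra remarks, that the range is nonempty so the maximum exists and that $V$ does not depend on $\theta$, are harmless sanity checks that the paper leaves implicit.
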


\begin{proof}
Fix $\theta_1,\theta_2\in\mathbb{R}^W$ and an auction instance $S=(C,B)$ such that $R_{\theta_1}(C)=R_{\theta_2}(C)$.

Then
\begin{align*}
w(\theta_1;S) &= \max_{x\in R_{\theta_1}(C)}V(x,B) \\
&= \max_{x\in R_{\theta_2}(C)}V(x,B)\\
&=
w(\theta_2;S).
\end{align*}
The first and third equalities follow from the definition of $w$, while the second follows from the assumption that the two parameter vectors generate the same range on context profile $C$.
\end{proof}

\begin{lemma}[Algorithmic Branching of Beam Search] \label{lemma:comparisons}
Fix a context profile $C$ containing $N$ candidate pieces of content. Generating a feed of length $T$ using beam search with width $K$ evaluates at most $E_{total} = O(T K^2 N^2)$  pairwise comparisons of scores assigned by the transformer to candidate continuations. 
Given a fixed tie-breaking rule, the outcomes of these comparisons uniquely determine the output range $R_\theta(C)$.
\end{lemma}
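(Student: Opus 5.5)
We prove \Cref{lemma:comparisons} by writing deterministic beam search as an explicit procedure and counting the score comparisons it needs at each generation step. The state at step $t$ is the current beam: at most $K$ partial feeds $x^{(1)}_{<t},\dots,x^{(k)}_{<t}$ with $k\le K$. Each partial feed can be extended by any of at most $N$ items not yet used and allowed by the composition mask. The mask depends only on $C$ and on the partial feed, not on $\theta$. So the candidate set of continuations at step $t$ is fixed once the beam is fixed, and it has size at most $KN$. Each continuation $(x^{(i)}_{<t},a)$ is assigned a transformer score. For beam search this is either the step score $s_\theta(a\mid C,x^{(i)}_{<t})$ or the cumulative score of the partial sequence, which is a sum of earlier step scores along that prefix. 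In both cases it is a fixed function of $\theta$ once the history is fixed.

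Next, the new beam is the top-$K$ set of the candidate continuations under the score order, with ties broken by the fixed rule. The key observation is that this top-$K$ set is a deterministic function of the full pairwise order relation among the candidates. That relation is given by the outcomes of the indicators $\mathbf 1\{\mathrm{score}(c)\ge \mathrm{score}(c')\}$ over all pairs, together with the tie-breaking rule for equal scores. There are at most $\binom{KN}{2}=O(K^2N^2)$ such pairs. Any sorting or selection routine reads only these outcomes, so it uses at most that many comparisons. In particular, two parameter vectors that induce the same comparison outcomes induce the same sorted order, and hence the same next beam.

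Finally, we induct over $t=1,\dots,T$. The initial beam is empty and fixed. If the comparison outcomes at steps $1,\dots,t-1$ agree, the beams agree. Then the step-$t$ candidate sets and their scoring inputs agree, and so agreement of the step-$t$ comparisons gives agreement of the step-$t$ beam. Summing over the $T$ steps gives $E_{total}\le T\binom{KN}{2}=O(TK^2N^2)$. The output range $R_\theta(C)$ is the final beam, up to the $K$ terminal feeds, and is therefore uniquely determined by the sequence of outcomes.

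The main point needing care is the cumulative-score case. There the comparisons are between sums of scores from earlier steps, not raw $s_\theta$ values. This does not change the count of comparisons, and it is harmless downstream: a difference of sums of polynomials remains a polynomial of the same degree. I would state the lemma for whichever scoring convention the beam search uses and note that the count holds either way.
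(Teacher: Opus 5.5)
Your proof is correct and follows essentially the same argument as the paper. Conditional on the incoming beam, the at most $KN$ candidate continuations are fixed, and the next beam is determined by the $\binom{KN}{2}$ pairwise score comparisons together with the tie-breaking rule; iterating over the $T$ steps gives $T\binom{KN}{2}=O(TK^2N^2)$ comparisons whose outcomes determine $R_\theta(C)$. Your explicit induction and the cumulative-score remark go slightly beyond the paper, which works only with the per-step score $s_\theta$, but they do not change the argument or the count.
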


\begin{proof}
At each generation step $t \in \{1,\dots,T\}$, beam search maintains at most $K$ active prefixes. Each active prefix has at most $N$ possible continuations, so there are at most $KN$ candidate continuations at that step.

Let 
$s_\theta(C(x_{t}),p) \in \mathbb{R}$ 
denote the score assigned by beam search at step $t$ to a candidate continuation $p$ under context $C(x_t)$. 
Conditional on the beam state entering at step $t$ (i.e., the set of current beams 
$\{x_{t-1}^1, \dots, x_{t-1}^K \}$), the set of candidate continuations is fixed. The top $K$ continuations are therefore uniquely determined by the outcomes of all pairwise comparisons 
$s_\theta(C(x_{t-1}^k),p) > s_\theta(C(x_{t-1}^{k'}),p')$, $k, k' \in [K], p, p' \in [N]$
among these candidate continuations, together with the fixed tie-breaking rule. 
The number of such comparisons is at most 
$q = \binom{KN}{2} = O(K^2N^2).$

Thus, conditional on the beam state entering step $t$, the outcomes of at most $q$ pairwise score comparisons uniquely determine the beam state entering step $t+1$, applying this argument sequentially over the $T$ generation steps shows that a complete execution evaluates at most
\begin{equation}
E_{\mathrm{total}} =  Tq = T {KN \choose 2} =  O(TK^2N^2)
\end{equation}
comparisons and that their sequential outcomes uniquely determine the final output range $R_\theta(C)$.
\end{proof}

\begin{lemma}[Growth of Beam-Search Ranges] \label{lemma:growth} Let $\mathcal{H}_{\mathrm{cmp}}$ be the binary hypothesis class induced by pairwise comparisons of scores assigned by the transformer class,
$d_{\mathcal{H}_\mathrm{cmp}} = \operatorname{VCdim} \left( \mathcal{H}_{\mathrm{cmp}} \right)$ be its VC-dimension and let $q=\binom{KN}{2}$. 
For any fixed collection of $m$ context profiles $C_1,\dots,C_m$ arising from $m$ auctions, the number of distinct range tuples $\left(R_\theta(C_1),\dots,R_\theta(C_m)\right)$ that can be generated as $\theta$ varies is at most 
\begin{equation} \left( \frac{em q}{d_{\mathcal{H}_\mathrm{cmp}}} \right)^{d_{\mathcal{H}_\mathrm{cmp}}T}, 
\end{equation} if $mq\geq d_{\mathcal{H}_\mathrm{cmp}}$, and $2^{mqT}$ otherwise. 
\end{lemma}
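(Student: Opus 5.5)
The approach is to count execution paths of beam search jointly across the $m$ contexts. We go step by step through the $T$ generation steps and use Sauer's lemma for $\mathcal{H}_{\mathrm{cmp}}$ at each step. By \Cref{lemma:comparisons}, the full range tuple $(R_\theta(C_1),\dots,R_\theta(C_m))$ is a deterministic function of the sequence of comparison-outcome vectors $(\sigma_1,\dots,\sigma_T)$. Here $\sigma_t\in\{0,1\}^{mq}$ records the outcomes of the at most $q=\binom{KN}{2}$ pairwise score comparisons at step $t$ in each of the $m$ contexts. Since the tie-breaking rule is fixed, it therefore suffices to bound the number of realizable sequences $(\sigma_1,\dots,\sigma_T)$ as $\theta$ varies.

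We build a tree of beam-state histories. At depth $t-1$, each node is a realizable joint beam state: one set of $K$ prefixes per context, produced by some nonempty set $\Theta_{t-1}$ of parameters. Conditional on this history, the candidate continuations in every context are fixed. So the $mq$ comparison inputs are fixed points: pairs of (context, prefix, candidate) tuples. As $\theta$ ranges over $\Theta_{t-1}\subseteq\mathbb{R}^W$, the vector $\sigma_t$ is a labeling of these fixed $mq$ points by hypotheses in $\mathcal{H}_{\mathrm{cmp}}$. Sauer's lemma then gives at most $\Pi_{\mathcal{H}_{\mathrm{cmp}}}(mq)\le (emq/d_{\mathcal{H}_{\mathrm{cmp}}})^{d_{\mathcal{H}_{\mathrm{cmp}}}}$ children when $mq\ge d_{\mathcal{H}_{\mathrm{cmp}}}$. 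Otherwise we use the trivial bound $2^{mq}$. Restricting to $\Theta_{t-1}$ can only shrink the set of labelings, so the bound holds uniformly at every node. By induction on $t$, the number of leaves at depth $T$ is at most the product of $T$ such factors. This is $(emq/d_{\mathcal{H}_{\mathrm{cmp}}})^{d_{\mathcal{H}_{\mathrm{cmp}}}T}$, or $2^{mqT}$ in the small case, which is the claimed bound.

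The step needing the most care is justifying that the comparison inputs at step $t$ are genuinely fixed given the history. This is what lets us apply the growth function to a fixed sample of $mq$ points rather than to a $\theta$-dependent set. The point is that branching happens on discrete beam states, not on $\theta$ directly. Different parameter vectors reaching the same node see exactly the same candidate pairs, so the labeling count at that node is a growth-function count.

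We also need to fix what $\mathcal{H}_{\mathrm{cmp}}$ is precisely. It is the class of maps taking an input pair of (pool, prefix, item) triples to $\mathbf{1}\{s_\theta(\cdot)>s_\theta(\cdot)\}$, indexed by $\theta$. With this definition, every step-$t$ comparison in every context is an evaluation of one hypothesis from $\mathcal{H}_{\mathrm{cmp}}$ on a point in its domain. When the number of candidates is fewer than $q$, the missing comparisons are padded with dummy points, which only increases the count.
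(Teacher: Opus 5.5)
Your proposal is correct and follows essentially the same route as the paper. Your tree of realizable beam-state histories is the paper's recursion $\Gamma_t(m)\le\Gamma_{t-1}(m)\,\Pi_{\mathcal{H}_{\mathrm{cmp}}}(mq)$ with $\Gamma_0(m)=1$, and it rests on the same two observations (the $mq$ comparison inputs are fixed once the history is fixed, and restricting $\theta$ to the parameters realizing that history cannot increase the number of labelings), followed by Sauer's lemma or the trivial $2^{mq}$ bound, with your explicit definition of $\mathcal{H}_{\mathrm{cmp}}$ on (pool, prefix, item) triples and the padding remark being harmless refinements of details the paper leaves implicit.
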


\begin{proof}
For each $t\in\{0,\dots,T\}$, let $\Gamma_t(m)$ denote the maximum number of distinct tuples of beam states after $t$ generation steps that can be produced on the fixed contexts $C_1,\dots,C_m$ as $\theta$ varies. Before generation begins, the beam states are fixed, and hence $\Gamma_0(m)=1$.

Fix an advertiser context profile $C$, and a realizable set of beam states after step $t-1$. Conditional on these, the candidate continuations considered at step $t$ are fixed. 
By \Cref{lemma:comparisons}, determining the next beam requires at most $q = O(K^2 N^2)$ pairwise comparisons for the fixed context $C$. 
Thus, across the $m$ contexts, conditioned on the beam states after step $t-1$, the subsequent beam states at step $t$ are determined by the labels assigned by $\mathcal{H}_{\mathrm{cmp}}$ to a fixed collection of at most $mq$ comparison inputs.

The number of possible next beam-state tuples conditional on the fixed history is at most $\Pi_{\mathcal{H}_{\mathrm{cmp}}}(mq),$
where $\Pi_{\mathcal{H}_{\mathrm{cmp}}}$ denotes the growth function of $\mathcal{H}_{\mathrm{cmp}}$, i.e., the maximum number of binary labelings that the class $\mathcal{H}_{\mathrm{cmp}}$ can realize on any fixed set of $mq$ comparison queries. 
Conditioning on the fixed beam-state history restricts $\theta$ to a subset of $\mathbb{R}^W$, and restricting the parameter space cannot increase the number of labelings realizable by $\mathcal{H}_{\mathrm{cmp}}$. Therefore, the growth function of the full class remains a valid upper bound.
It follows that
\begin{equation*}
\Gamma_t(m)
\leq
\Gamma_{t-1}(m)
\Pi_{\mathcal{H}_{\mathrm{cmp}}}(mq).
\end{equation*}

Applying this inequality recursively over the $T$ generation steps yields
\begin{equation*}
\Gamma_T(m)
\leq
\left(
\Pi_{\mathcal{H}_{\mathrm{cmp}}}(mq)
\right)^T.
\end{equation*}

However, by Sauer's lemma \citep{SAUER1972145}, when $mq \ge d_{\mathcal{H}_\mathrm{cmp}}$,
we can bound the growth function of $\mathcal{H}_{\mathrm{cmp}}$ as
$\Pi_{\mathcal{H}_{\mathrm{cmp}}}(mq)
\leq
\left(
\frac{e m q}{d_{\mathcal{H}_\mathrm{cmp}}}
\right)^{d_{\mathcal{H}_\mathrm{cmp}}}.$
Therefore,
\begin{equation}
\Gamma_T(m)
\leq
\left(
\frac{e m q}{d_{\mathcal{H}_\mathrm{cmp}}}
\right)^{d_{\mathcal{H}_\mathrm{cmp}}T}.
\end{equation}

If instead $mq < d_{\mathcal{H}_{\mathrm{cmp}}}$, then trivially
% \begin{equation}
$\Pi_{\mathcal{H}_{\mathrm{cmp}}}(mq)
\leq
2^{mq},$
% \end{equation}
since there are at most $2^{mq}$ binary labelings of $mq$ comparison inputs. Hence,
\begin{equation}
\Gamma_T(m)
\leq
2^{mqT}.
\end{equation}

The final output range is determined by the beam state after the $T$-th generation step. Thus, $\Gamma_T(m)$ upper-bounds the number of distinct range tuples
$\left(
R_\theta(C_1),\dots,R_\theta(C_m)
\right),$
which proves the result.
\end{proof}

Now we have all the lemmata necessary for the proof of \Cref{thm:pdim}:

\begin{proof}[Proof of \Cref{thm:pdim}]
Suppose that $\mathcal{W}$ pseudo-shatters $m$ auction instances
$S_1,\dots,S_m$ with corresponding thresholds $y_1,\dots,y_m\in\mathbb{R}$.
Then, as $\theta$ varies, $\mathcal{W}$ realizes all $2^m$ binary labelings of the conditions
$w(\theta;S_i)>y_i, i\in \{1,\dots,m\}.$

By \Cref{lemma:bottleneck}, two parameter vectors that generate the same tuple of ranges on $S_1,\dots,S_m$ generate the same welfare vector, and thus the same binary labeling with respect to the thresholds $y_1,\dots,y_m$. 
Consequently, pseudo-shattering the $m$ instances requires the transformer to generate at least $2^m$ distinct range tuples as its parameters vary in $\mathbb{R}^W$.

Let
$d_{\mathcal{H}_\mathrm{cmp}} = \operatorname{VCdim}(\mathcal{H}_{\mathrm{cmp}})$
and $q = \binom{KN}{2}$. We distinguish the two cases of \Cref{lemma:growth}.

First, suppose that
$m q < d_{\mathcal{H}_\mathrm{cmp}}$.
Then directly
$m
<
\frac{d_{\mathcal{H}_\mathrm{cmp}}}{q}
\leq
d_{\mathcal{H}_\mathrm{cmp}}.$

Now suppose that
$m q \geq d_{\mathcal{H}_\mathrm{cmp}}$.
By \Cref{lemma:growth}, the number of distinct range tuples is at most
$\left(
\frac{e m q}{d_{\mathcal{H}_\mathrm{cmp}}}
\right)^{d_{\mathcal{H}_\mathrm{cmp}}T}.$
Since pseudo-shattering requires at least $2^m$ distinct range tuples, we must have
\begin{equation*}
2^m
\leq
\left(
\frac{e m q}{d_{\mathcal{H}_\mathrm{cmp}}}
\right)^{d_{\mathcal{H}_\mathrm{cmp}}T}.
\end{equation*}

Taking logarithms
\begin{equation*}
m
\leq
d_{\mathcal{H}_\mathrm{cmp}}T
\log_2
\left(
\frac{e m q}{d_{\mathcal{H}_\mathrm{cmp}}}
\right).
\end{equation*}

Using the inversion $m\leq a\log m+b \implies m = O(a \log a + b)$, 
yields
$m = O\left( d_{\mathcal{H}_\mathrm{cmp}}T\log(Tq) \right).$

Since
$q=\binom{KN}{2}=O(K^2N^2)$,
we have
$\log(Tq) = O\left( \log(TKN) \right).$
Hence,
\begin{equation}
m
=
O\left(
d_{\mathcal{H}_\mathrm{cmp}}T\log(TKN)
\right).
\end{equation}

The case $mq<d_{\mathcal{H}_\mathrm{cmp}}$ yields the smaller bound
$m<d_{\mathcal{H}_\mathrm{cmp}}/q$, and is therefore covered by the same asymptotic upper bound. Thus,
\begin{equation}
\operatorname{Pdim}(\mathcal{W}) = 
O\left(
d_{\mathcal{H}_\mathrm{cmp}}T\log(TKN)
\right).
\end{equation}

Finally, by \Cref{thm:pairwise-transformer-vc}, under $L\leq W$,
$d_{\mathcal{H}_{\mathrm{cmp}}}
=
O\left(
WL\log(TW)
\right).$
Therefore,
\begin{equation}
\operatorname{Pdim}(\mathcal{W}) = 
O\left(
WLT
\log(TW)
\log(TKN)
\right).
\end{equation}

\end{proof}

\begin{theorem}[VC dimension of pairwise Transformer comparisons]
\label{thm:pairwise-transformer-vc}

Fix a Transformer architecture $A$ of depth $L$ and with $W$ trainable
parameters, as in \citet{yang2026tightcomplexity}, and let
$s_\theta(H)\in\mathbb{R}$
denote its real-valued pre-threshold output on an input sequence
$H\in\mathbb{R}^{T\times d}$.
Define the corresponding pairwise-comparison class
\begin{equation}
\mathcal H_{\mathrm{cmp}}
:=
\left\{
(H,H')
\mapsto
\mathbbm{1}
\left\{
s_\theta(H)>s_\theta(H')
\right\}
:
\theta\in\mathbb{R}^W
\right\}.
\end{equation}
Then
\[
\operatorname{VCdim}(\mathcal H_{\mathrm{cmp}})
=
O\!\left(WL\log(TWL)\right).
\]
In particular, under the assumption $L\le W$,
\begin{equation}
\operatorname{VCdim}(\mathcal H_{\mathrm{cmp}})
=
O\!\left(WL\log(TW)\right).
\end{equation}
\end{theorem}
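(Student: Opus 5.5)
We bound the VC dimension of $\mathcal H_{\mathrm{cmp}}$ with the standard parameter-space partitioning and Warren's-theorem argument, adapted from single sequences to pairs of sequences. Suppose $\mathcal H_{\mathrm{cmp}}$ shatters $m$ pairs $(H_1,H_1'),\dots,(H_m,H_m')$. We must show $2^m$ is at most the number of sign patterns of the $m$ functions $\theta\mapsto s_\theta(H_i)-s_\theta(H_i')$ over $\theta\in\mathbb{R}^W$, and then bound that count. The key observation is that a comparison on a pair is a single threshold on the difference of two network outputs. So any partition of parameter space on which both $s_\theta(H_i)$ and $s_\theta(H_i')$ are polynomial also makes the difference polynomial, of no larger degree.

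\textbf{Step 1: partition.} Apply the partitioning lemma of \citet{yang2026tightcomplexity} (\Cref{lem:pairwise-partition}) to the combined collection of $2m$ input sequences $\{H_i,H_i'\}_{i\le m}$. This lemma is proved layer by layer. At each of the $L$ sequential layers, the branching decisions are the following:
\begin{itemize}
\item[] the hard-attention argmax comparisons among at most $T$ keys per query;
\item[] the piecewise-linear activation breakpoints, at each of the at most $T$ positions of each sequence.
\end{itemize}
Each decision is a sign of a polynomial in $\theta$ whose degree grows by $O(1)$ per layer. Within a cell all these signs are fixed, so the outputs are polynomials of degree $O(L)$. Counting cells inductively with Warren's theorem over $O(mT^2\cdot\mathrm{poly}(W))$ polynomials per layer gives a bound on the number of cells. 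Since $L\le W$, the per-layer widths are at most $W$ and this is a valid use. The bound has the form
\[
\prod_{\ell=1}^{L}\left(\frac{c\, m T^2 W \ell}{W}\right)^{W} \le \left(c\, m T W L\right)^{O(WL)},
\]
for some absolute constant $c$.

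\textbf{Step 2: local patterns and counting.} Within each cell, the $m$ comparison functions are degree-$O(L)$ polynomials. Warren's theorem (\Cref{lem:pairwise-local-patterns}) then gives at most $(c\, mL/W)^{W}$ sign patterns per cell. Multiplying the two bounds yields
\[
2^m \le \left(c\, m T W L\right)^{O(WL)}.
\]
Taking logarithms gives $m\le O(WL)\log(mTWL)$. The inversion $m\le a\log m+b\Rightarrow m=O(a\log a+b)$, with $a=O(WL)$, then yields $m=O(WL\log(TWL))$. Under $L\le W$ we have $\log(TWL)\le 2\log(TW)$, which gives the stated corollary.

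\textbf{Main obstacle.} The delicate step is Step 1: checking that the partition argument of \citet{yang2026tightcomplexity} really applies verbatim when inputs come in pairs. Their count is stated for $m$ inputs, and we need it for $2m$ inputs. We must also confirm that the dependence on the sequence length enters only through the number of branching polynomials, which is polynomial in $T$, so that it costs only a $\log T$ factor. Doubling the inputs changes only constants inside the logarithm, so we would restate their cell-count bound with $2m$ in place of $m$ and track that the exponent stays $O(WL)$ rather than $O(WL^2)$. That requires the per-layer exponent to be the number of parameters in layers up to $\ell$, summed over $\ell$, which is at most $WL$.
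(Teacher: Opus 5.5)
Your proposal is correct and follows essentially the same route as the paper: apply the partition of \citet{yang2026tightcomplexity} to the $2m$ sequences, use Warren's theorem on the $m$ degree-$O(L)$ difference polynomials within each cell, multiply the two counts, take logarithms, and invert $m\le a\log m+b$. Add the trivial case $m<W$, where Warren's hypothesis $W\le m$ fails but $m<W=O(WL\log(TWL))$ holds outright; the paper disposes of it first. Also note that Warren applies in your Step 1 because there are at least $W$ branching polynomials, not because $L\le W$; that assumption is needed only for the final corollary.
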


We prove \Cref{thm:pairwise-transformer-vc} following the parameter-space partitioning argument of \citet{yang2026tightcomplexity}.  
The proof has two parts. 
First, for any fixed collection of input pairs, we partition the parameter space into regions on which all attention-routing and ReLU (or any piecewise-linear activation function) branching decisions are fixed.  
Second, we show that within each such region, each pairwise comparison is the sign of a polynomial in the Transformer parameters of a bounded degree, allowing us to bound the number of comparison patterns using Warren's theorem \citep{AnthonyBartlett1999}.

\begin{lemma}[Partition of the parameter space into polynomial regions]
\label{lem:pairwise-partition}
Fix $m$ arbitrary input pairs
$(H_1^+,H_1^-),\ldots,(H_m^+,H_m^-)
\in
\left(\mathbb{R}^{T\times d}\right)^2.$
There exists a partition $\mathcal P_L$ of $\mathbb{R}^W$ such that, for
every region $P \in\mathcal P_L$:

\begin{enumerate}
    \item all attention-routing decisions and ReLU activation patterns of
    The Transformers are fixed on each of the $2m$ inputs
    $H_1^+,H_1^-,\ldots,H_m^+,H_m^-;$
    \item for every $i\in[m]$, both
    $s_\theta(H_i^+)$ and $s_\theta(H_i^-)$ coincide on $P$ with
    polynomials in $\theta$ of degree at most
    $D_{\mathrm{out}}=O(L);$
    \item the number of regions satisfies
    $\log |\mathcal P_L|
    =
    O\!\left(
    WL(\log m+\log T+\log L)
    \right).$
\end{enumerate}
\end{lemma}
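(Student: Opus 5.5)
The plan is to build the partition $\mathcal P_L$ inductively over the $L$ sequential layers, following the parameter-space partitioning argument of \citet{yang2026tightcomplexity}. The $2m$ inputs $H_1^\pm,\ldots,H_m^\pm$ are all handled simultaneously. The invariant is as follows. After processing layer $\ell$, we have a partition $\mathcal P_\ell$ of $\mathbb{R}^W$. On each of its regions, every branching decision in layers $1,\ldots,\ell$ is constant on all $2m$ inputs. These decisions are the ReLU or piecewise-linear breakpoint patterns and the hard-attention argmax selections. In addition, every hidden coordinate of every token of every input equals a fixed polynomial in $\theta$ of degree at most $D_\ell$. We start from $\mathcal P_0=\{\mathbb{R}^W\}$, where the embeddings are polynomials of degree at most $1$ in $\theta$.

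\textbf{Inductive step.} Fix a region $P\in\mathcal P_{\ell-1}$ and consider layer $\ell$.

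For a feed-forward layer, each pre-activation is (weight) $\times$ (degree-$D_{\ell-1}$ polynomial) + bias, which has degree at most $D_{\ell-1}+1$. The branch taken by a piecewise-linear activation is determined by the signs of a constant number of such polynomials, one per breakpoint. The number of polynomials involved is at most $O(2m\,T\,U_\ell)$, where $U_\ell$ is the width.

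For a hard-attention layer, the query-key scores are bilinear in the hidden states through the weights, so each has degree at most $2D_{\ell-1}+O(1)$. The routing decision (which key attains the max, under fixed tie-breaking) is determined by the signs of pairwise differences of scores, giving $O(2m\,T^2\,H_\ell)$ polynomials in total. Once routing is fixed, the output is the value projection of the selected token, so the degree grows by only $O(1)$.

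We refine $P$ by the sign patterns of all these polynomials. By Warren's theorem, a collection of $k$ polynomials of degree at most $D$ in $W$ variables has at most $(8ekD/W)^W$ sign patterns, which bounds the number of sub-regions created from $P$. On each sub-region all branches are fixed, so the hidden states are again polynomials and the invariant propagates.

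\textbf{Main obstacle: the degree bound.} The step I expect to be hardest is keeping $D_\ell=O(\ell)$ rather than letting it blow up exponentially. The recursion above naively gives $D_\ell\le 2D_{\ell-1}+O(1)$ at attention layers, because the attention scores are products. The fix, as in \citet{yang2026tightcomplexity}, is to observe that the quadratic score enters only through the sign comparisons that define the partition. It never enters the propagated hidden state: after routing is fixed, hard attention outputs a single value vector with no softmax weighting. The degree that propagates forward therefore increases additively, by $O(1)$ per layer, giving $D_\ell = O(\ell)$ and $D_{\mathrm{out}}=O(L)$. The polynomials used for partitioning have degree at most $2D_{\ell-1}+O(1)=O(L)$, which is enough for Warren's bound. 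This yields items 1 and 2 of \Cref{lem:pairwise-partition}.

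\textbf{Counting regions.} This gives item 3. Multiplying the Warren bounds over the layers,
\[
|\mathcal P_L|\le \prod_{\ell=1}^{L}\left(\frac{c\, m T^2 U\, L}{W}\right)^{W},
\]
where $U$ bounds the per-layer number of units and heads. Since $U\le W$, taking logarithms gives
\[
\log|\mathcal P_L| = O\!\left(WL(\log m+\log T+\log L)\right),
\]
with the $\log U$ term absorbed as in the reference. If the per-layer parameter counts $W_\ell$ with $\sum_\ell W_\ell=W$ are used in Warren's theorem instead of $W$, the bound only improves. Either way, the stated estimate for item 3 follows.
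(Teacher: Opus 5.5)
Your proposal is correct and follows the paper's route: the paper simply invokes the recursive partition of \citet{yang2026tightcomplexity} (Appendix~B.1) on $I=2m$ arbitrary fixed inputs, and you unpack that same construction layer by layer, correctly noting that hard attention selects a value rather than multiplying, so the propagated degree grows additively and $D_{\mathrm{out}}=O(L)$. Two small slips do not affect the conclusion: fixing the argmax for each of the $T$ query positions needs $O(T^2)$ pairwise score differences, so an attention layer contributes $O(mT^3H_\ell)$ polynomials rather than $O(mT^2H_\ell)$ (this changes only a constant inside the logarithm), and your closing aside about per-layer counts is wrong as stated, because the layer-$\ell$ polynomials depend on all parameters of layers $1,\ldots,\ell$, so Warren's theorem must be applied with the cumulative count $\sum_{j\le\ell}W_j$ rather than $W_\ell$ --- your main count with $W$ is unaffected.
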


\begin{proof}
The proof follows the recursive parameter-space partitioning argument of
\citet[Appendix~B.1]{yang2026tightcomplexity}. For any collection of $I$
fixed input sequences, their construction partitions the parameter space
$\mathbb R^W$ into regions $\mathcal P_L$ such that, within every
$P \in\mathcal P_L$, all attention-routing decisions and ReLU activation
patterns on all $I$ inputs are fixed. Consequently, the Transformer output
on each input coincides, within $P$, with a polynomial in the parameters
$\theta$ of degree $O(L)$. 
They upper bound the number of such regions as: 
\begin{equation}
\log |\mathcal P_L|
=
O\!\left(
WL(\log I+\log T+\log L)
\right);
\end{equation}
see Appendix~B.1, in particular equations~(28)--(30), of
\citet{yang2026tightcomplexity}.

We apply this construction to the $2m$ fixed input sequences
$H_1^+,H_1^-,\ldots,H_m^+,H_m^-$. The construction applies to an arbitrary
collection of fixed inputs, so no relationship between $H_i^+$ and $H_i^-$,
or between different pairs, is required. Thus, we simply set $I=2m$ in
the bound above. It follows that, within every region $P \in\mathcal P_L$,
all branching decisions are fixed simultaneously on all $2m$ inputs, and,
for every $i\in[m]$, both $s_\theta(H_i^+)$ and $s_\theta(H_i^-)$ are
polynomials in $\theta$ of degree $D_{\mathrm{out}}=O(L)$. Moreover,
\begin{align}
\log |\mathcal P_L|
&=
O\!\left(
WL(\log(2m)+\log T+\log L)
\right) \nonumber \\
&=
O\!\left(
WL(\log m+\log T+\log L)
\right).
\end{align}
\end{proof}

\begin{lemma}[Comparison patterns within a polynomial region]
\label{lem:pairwise-local-patterns}
Fix a region $P\in\mathcal P_L$ as in \Cref{lem:pairwise-partition}.  
For each input pair $i\in[m]$, define 
$g_i^P(\theta) = s_\theta(H_i^+)-s_\theta(H_i^-), \theta\in P.$
Then on $P$, each $g_i^P$  coincides with a polynomial in $\theta$  of degree
at most $D_{\mathrm{out}}=O(L).$ Moreover, if $m\ge W$, the number $\Pi_P^{\mathrm{cmp}}$ of distinct
pairwise-comparison vectors
$\left(
\mathbbm{1}\{s_\theta(H_1^+)>s_\theta(H_1^-)\},
\ldots,
\mathbbm{1}\{s_\theta(H_m^+)>s_\theta(H_m^-)\}
\right)$
realized as $\theta$ varies over $P$ satisfies
$\Pi_P^{\mathrm{cmp}}
\le
2
\left(
\frac{2emD_{\mathrm{out}}}{W}
\right)^W.$
\end{lemma}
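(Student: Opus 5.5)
The plan has two parts: first establish polynomiality of each $g_i^P$ on $P$, then apply Warren's theorem in the form given by \citet{AnthonyBartlett1999}.

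\textbf{Polynomiality.} By item 2 of \Cref{lem:pairwise-partition}, on $P$ both $s_\theta(H_i^+)$ and $s_\theta(H_i^-)$ coincide with polynomials $p_i^+(\theta)$ and $p_i^-(\theta)$ of degree at most $D_{\mathrm{out}}$. Their difference $g_i^P = p_i^+ - p_i^-$ is therefore a polynomial of degree at most $D_{\mathrm{out}}=O(L)$. This holds simultaneously for all $i\in[m]$, because the region fixes all branching decisions on all $2m$ inputs at once.

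\textbf{Counting comparison vectors.} The comparison vector on $P$ is $(\mathbbm{1}\{g_1^P(\theta)>0\},\ldots,\mathbbm{1}\{g_m^P(\theta)>0\})$. This vector is a function of the sign pattern of the $m$ polynomials $g_1^P,\dots,g_m^P$ in $W$ real variables. I will invoke the version of Warren's theorem used by \citet{AnthonyBartlett1999} (their Theorem 8.3): for $m\ge W$ polynomials of degree at most $D$ in $W$ variables, the number of distinct vectors $(\mathrm{sgn}(f_1(\theta)),\ldots,\mathrm{sgn}(f_m(\theta)))\in\{0,1\}^m$, where $\mathrm{sgn}(z)=\mathbbm{1}\{z>0\}$, is at most $2(2emD/W)^W$.

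Two easy points remain.
\begin{enumerate}
\item Restricting $\theta$ from $\mathbb{R}^W$ to the subset $P$ can only reduce the set of realizable patterns. So the global bound applies to $\Pi_P^{\mathrm{cmp}}$, even though $g_i^P$ agrees with the polynomial only on $P$.
\item The threshold $\mathbbm{1}\{\cdot>0\}$ is exactly the strict-sign convention in that theorem. Ties $s_\theta(H_i^+)=s_\theta(H_i^-)$ are thus correctly mapped to $0$, and no separate handling of zero sets is needed.
\end{enumerate}
Substituting $D=D_{\mathrm{out}}$ gives $\Pi_P^{\mathrm{cmp}}\le 2(2emD_{\mathrm{out}}/W)^W$.

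\textbf{Main obstacle.} The only delicate step is matching the hypotheses of Warren's bound: the condition $m\ge W$, the degree $D_{\mathrm{out}}\ge 1$, and the $\{>0\}$ versus $\{\ge 0\}$ convention. If a degenerate case arises where some $g_i^P$ is constant on $P$, it contributes a fixed coordinate. Such a coordinate can only decrease the count, and it can be absorbed by treating it as a degree-$0$ (hence degree $\le D_{\mathrm{out}}$) polynomial. The case $m<W$ is excluded by hypothesis and is handled later, when the per-region bound is combined with $|\mathcal P_L|$ to prove \Cref{thm:pairwise-transformer-vc}.
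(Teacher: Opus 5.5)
Your proposal is correct and follows essentially the same route as the paper. Like the paper, you use \Cref{lem:pairwise-partition} to write $g_i^P = p_{i,+}^P - p_{i,-}^P$ as a polynomial of degree at most $D_{\mathrm{out}}$, and then apply Warren's bound to these $m$ difference polynomials in $W$ variables. The differences are cosmetic: the paper phrases Warren's bound for ternary sign vectors in $\{-1,0,1\}^m$, while you use the binary form, which matches the comparison vector directly. The strict-versus-non-strict sign convention is immaterial, since $\mathbbm{1}\{g>0\}=1-\mathbbm{1}\{-g\ge 0\}$ and $-g_i^P$ has the same degree. You also make explicit the harmless point, left implicit in the paper, that restricting $\theta$ to $P$ can only reduce the number of patterns.
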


\begin{proof}
By \Cref{lem:pairwise-partition}, for each $i\in[m]$ there exist
polynomials $p_{i,+}^P$ and $p_{i,-}^P$ such that, on $P$,
$s_\theta(H_i^+)=p_{i,+}^P(\theta), s_\theta(H_i^-)=p_{i,-}^P(\theta),$
with
$\deg(p_{i,+}^P) \le D_{\mathrm{out}}$ and 
$\deg(p_{i,-}^P) \le D_{\mathrm{out}}.$
Thus
$g_i^P(\theta)
=
p_{i,+}^P(\theta)-p_{i,-}^P(\theta)$
is itself a polynomial in the same $W$ parameters, of degree at most 
\[
\deg(g_i^P)
\le
\max\{
\deg(p_{i,+}^P),
\deg(p_{i,-}^P)
\}
\le
D_{\mathrm{out}}.
\]
Although $2m$ transformer executions are required to define
the partition, there are only $m$ output-comparison polynomials
$g_1^P,\ldots,g_m^P$.
Furthermore, $\mathbbm{1}\{s_\theta(H_i^+)>s_\theta(H_i^-)\} = \mathbbm{1}\{g_i^P(\theta)>0\}.$
Thus the entire comparison vector is determined by the ternary sign pattern
of the $m$ polynomials $g_1^P,\ldots,g_m^P.$
Applying Warren's theorem to these $m$ polynomials in $W$ variables, each
of degree at most $D_{\mathrm{out}}$, gives
\[
\Pi_P^{\mathrm{cmp}}
\le
2
\left(
\frac{2emD_{\mathrm{out}}}{W}
\right)^W.
\]
\end{proof}

\begin{proof}[Proof of Theorem~\ref{thm:pairwise-transformer-vc}]
Suppose that $\mathcal H_{\mathrm{cmp}}$  shatters $m$ input pairs $(H_1^+,H_1^-),\ldots,(H_m^+,H_m^-).$
If $m<W$, then
$m<W
=
O(WL\log(TWL)),
$
and the claimed bound follows immediately.  Hence assume $m\ge W$.

Let $\mathcal P_L$ be the partition supplied by
Lemma~\ref{lem:pairwise-partition}.  Since the $m$ pairs are shattered,
all $2^m$ binary comparison vectors are realized.  
Decomposing according to the partition yields:
\[
2^m
\le
\sum_{P\in\mathcal P_L}
\Pi_P^{\mathrm{cmp}}
\le
|\mathcal P_L|
\max_{P\in\mathcal P_L}
\Pi_P^{\mathrm{cmp}}.
\]
Applying Lemmata ~\ref{lem:pairwise-partition} and ~\ref{lem:pairwise-local-patterns}, we have 
$2^m \le |\mathcal P_L| \cdot 2 \left(
\frac{2emD_{\mathrm{out}}}{W}
\right)^W,$
where
$\log|\mathcal P_L| = O\!\left( WL(\log m+\log T+\log L) \right)$
and
$D_{\mathrm{out}}=O(L).$
Taking logarithms yields
\begin{align*}
m
& = 
O\!\left(
WL(\log m+\log T+\log L)
\right)
+
O\! \left (W\log\left(
\frac{2 e m D_{\mathrm{out}}}{W}
\right) \right )\\
&=
O\!\left(
WL\log m
+
WL\log T
+
WL\log L  
\right) + O \left (W \log \left (\frac{mL}{W} \right ) \right ) \\ 
&=
O\!\left(
WL\log m
+
WL\log T
+
WL\log L
\right).
\end{align*}
% where the final term is absorbed into the first-order $WL$ terms since $L\ge1$.

Thus
$m \le O(WL\log m ) + O\!\left(WL(\log T+\log L)\right).$
Using the inversion
$m\le a\log m+b
\implies
m=O(a\log a+b),$ \citep[Lemma A.2]{understandingML}
we obtain
\begin{align*}
m
&=
O\!\left(
WL\log(WL)
+
WL\log T
+
WL\log L
\right) \\
&=
O\!\left(
WL\log(TWL)
\right).
\end{align*}
Therefore,
\begin{equation}
    \operatorname{VCdim}(\mathcal H_{\mathrm{cmp}}) = O\!\left(WL\log(TWL)\right).    
\end{equation}

Finally, under the assumption that the transformer has fewer blocks $L$ than trainable parameters $W$, 
we have $\log(TWL) \le \log(TW^2)=O(\log(TW)),$
thus
\begin{equation}
\operatorname{VCdim}(\mathcal H_{\mathrm{cmp}})
=
O\!\left(WL\log(TW)\right).    
\end{equation}

\end{proof}

\begin{lemma}[Warren's Bound, Lemma 2.1 in \citet{AnthonyBartlett1999}]
 Let $p_1, p_2, \dots p_m$ be polynomials in $W$ variables, each of degree at most $D$. If $W \le m$, the number of distinct sign vectors 
 $(sgn(p_1(\theta)), sgn(p_2 (\theta)), \dots, sgn(p_m (\theta))) \in \{-1, 0, 1 \}^m$ 
 as $\theta$ varies in $\mathbb{R}^W$ is at most $2(2emD/W)^W.$
\end{lemma}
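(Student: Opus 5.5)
The plan is to reduce counting sign vectors to counting connected components of a semialgebraic set, and then to bound those components with a Bézout/Milnor-type estimate. This is the route taken in \citet{AnthonyBartlett1999}. Since the lemma is quoted from there, the aim is to reproduce that classical argument rather than to invent a new one.

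\textbf{Step 1: pass to a generic, zero-free family.} Let $\Sigma$ be the finite set of realized sign vectors, and choose one witness $\theta_\sigma$ for each $\sigma\in\Sigma$. Take $\varepsilon>0$ smaller than $|p_i(\theta_\sigma)|$ for every $i$ and $\sigma$ with $p_i(\theta_\sigma)\neq 0$. Consider the $2m$ shifted polynomials $p_i-\varepsilon$ and $p_i+\varepsilon$. Their joint strict sign pattern at $\theta_\sigma$ recovers $\sigma$: a zero of $p_i$ shows up as the pattern $(-,+)$. Distinct $\sigma$ therefore lie in distinct connected components of
\[
U=\{\theta:\ p_i(\theta)\pm\varepsilon\neq 0\ \forall i\}.
\]
Perturbing $\varepsilon$ slightly, and adding an arbitrarily small generic linear term if needed, we may assume by Sard's theorem that every intersection of at most $W$ of the hypersurfaces $\{p_i=\pm\varepsilon\}$ is a nonsingular complete intersection. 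We may also assume no $W+1$ of them meet.

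\textbf{Step 2: charge components of $U$ to small intersections.} We bound the number of components of $U$ by counting the components of the varieties
\[
Z_S=\bigcap_{j\in S}\{f_j=0\}, \qquad |S|\le W,
\]
where $f_j$ ranges over the shifted polynomials. Each bounded component of $U$ contains a local extremum of a generic linear functional restricted to its closure. That extremum lies on some $Z_S$ and can be charged to a component of that $Z_S$; unbounded components are handled by intersecting with a large ball. Each component of $Z_S$ gets charged at most a constant number of times. By the Milnor--Thom/Bézout bound, a nonsingular variety cut out by polynomials of degree at most $D$ in $\mathbb{R}^W$ has at most $2(2D)^{W}$ connected components. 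Summing over $S$ and using $\sum_{k\le W}\binom{m}{k}\le (em/W)^W$ for $W\le m$ gives
\[
\sum_{k=0}^{W}\binom{m}{k}\,2(2D)^{W}\;\le\;2\left(\frac{2emD}{W}\right)^{W}.
\]

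\textbf{Step 3: main obstacle.} The delicate point is the bookkeeping that keeps the constant at $m$ rather than $2m$. Each pair $p_i\pm\varepsilon$ has disjoint zero sets, so any $S$ with a nonempty intersection uses at most one shift per index $i$. Hence only $\binom{m}{k}2^{k}$ subsets matter, and the factor $2^k$ must be absorbed into the degree term; Anthony--Bartlett do this via the $2D$ in the Milnor--Thom count. Making this absorption rigorous, together with the genericity perturbation, is where I expect the real work to lie. If the exact constant proves troublesome, I would settle for $2(4emD/W)^W$. That weaker bound changes nothing asymptotically in Lemma~\ref{lem:pairwise-local-patterns} or Theorem~\ref{thm:pairwise-transformer-vc}.
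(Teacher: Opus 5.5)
The paper gives no proof of this lemma; it imports it from Anthony--Bartlett. Their Warren-type bound (Theorem~8.3 of their book) is stated for the two-valued sign, $\mathrm{sgn}(\alpha)=1$ iff $\alpha\ge 0$. Your reconstruction follows that classical route. For the three-valued vectors in the statement, however, it does not reach the constant $2e$, and the step you defer as ``bookkeeping'' is exactly where it breaks.

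After doubling to the $2m$ hypersurfaces $\{p_i=\pm\varepsilon\}$, the index sets $S$ with $|S|=k\le W$ and $Z_S\neq\emptyset$ number up to $\binom{m}{k}2^k$, not $\binom{m}{k}$. So your Step~2 display sums over the wrong family. With the per-intersection bound $2(2D)^W$ you quote, the correct total is
\[
2(2D)^W\sum_{k=0}^{W}\binom{m}{k}2^k\;\le\;2(2D)^W\sum_{k=0}^{W}\binom{2m}{k}\;\le\;2\left(\frac{4emD}{W}\right)^{W}.
\]
Nothing is left to absorb the $2^k$. In Anthony--Bartlett, the factor $2D$ inside $2(2D)^W$ is precisely what yields $2e$ for $m$ \emph{undoubled} polynomials, so it cannot be spent a second time. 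To get $2(2emD/W)^W$ for $\{-1,0,1\}$-patterns along your route, you would need a per-intersection component bound of order $D^W$ rather than $(2D)^W$. That is neither what the Milnor--Thom estimate you quote provides nor what the cited theorem asserts. Relatedly, your ``charged at most a constant number of times'' must become ``at most once'' for any such constant to survive.

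What does go through is the two-valued statement, and that is all the paper actually uses: in Lemma~\ref{lem:pairwise-local-patterns} the counted object is $(\mathbbm{1}\{g_i^P(\theta)>0\})_{i\le m}$. There one shift per polynomial suffices. Take $\varepsilon$ smaller than every positive value $p_i(\theta_\sigma)$ and use the $m$ hypersurfaces $\{p_i=\varepsilon\}$, generically perturbed within that admissible range. At each witness, the strict sign of $p_i-\varepsilon$ equals $\mathbbm{1}\{p_i(\theta_\sigma)>0\}$, so distinct patterns land in distinct components. Your Step~2 sum over $\binom{m}{k}$ is then the right one, giving $2(2emD/W)^W$. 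Alternatively, your fallback $2(4emD/W)^W$ for the three-valued version is correct and, as you note, changes nothing asymptotically in Theorem~\ref{thm:pairwise-transformer-vc}.
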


\section{Environment Setup}
\label{sec:env_setup}
We construct a synthetic environment with large pools of ads and organic posts, whose embeddings are sampled from Gaussian clusters around randomly generated topic centroids. Each user is represented by a weighted average of these centroids, with user-specific weights capturing different interests across topics. 

Each instance of a dataset is constructed by uniformly sampling $M$ ads and $N_O$ organic posts without replacement from their respective pools for a given user embedding. We then sample the ad qualities, bids, and organic benefits for the selected user--item pairs. The resulting item embeddings, ad qualities, and organic benefits form the context $C$ supplied to the range generator.

To model user interests, ad qualities, and organic benefits, samples are drawn based on user-item similarities, with their distributions determined by the cosine similarity between the user's representation and the item's embedding. Ad qualities and organic post benefits follow lognormal/Weibull distributions characterized by the user-item similarities. However, these distributions and their noise levels are chosen so that the corresponding advertiser and user welfare distributions remain balanced (Appendix Figure \ref{fig:welfare_distributions}). Thus, items that are more closely aligned with a user's interests tend to have higher qualities or benefits, while the two components of welfare stay comparable. Advertiser benefits-per-click are sampled i.i.d. from an exponential or other distribution, once per ad when the global pool is created, and are also sampled fresh during training. Figure~\ref{fig:welfare_distributions} summarizes the resulting item benefits and feed-level welfare, including the variation across alternative feeds for the same user and candidate pool.

Engagement rates depend on an item's intrinsic quality, position, and surrounding content through the permutation-externality models defined in Section~\ref{sec:model_and_preliminaries}.

\begin{figure}[tbp]
  \centering
  \caption{ \textbf{Welfare distributions for the default environment.} Generated over 2,000 contexts with 256 random feasible feeds each. From left to right: (1) ad bid--quality products and organic benefits; (2) welfare per feed; (3) within-context standard deviation across feeds; (4) within-context maximum minus minimum among sampled feeds. Blue denotes advertiser quantities and green denotes organic quantities.
  }
  \includegraphics[width=\linewidth]{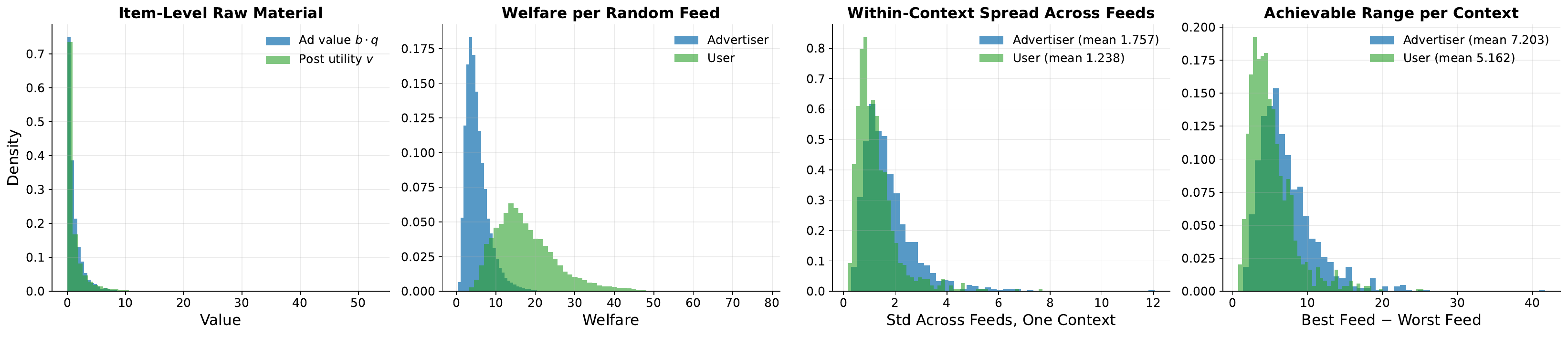}
  
  \label{fig:welfare_distributions}
\end{figure}

\section{Further MIRT Training Details}
\label{app:mirttrainingapp}

Our MIRT networks have 1.38M parameters, with four encoder and two decoder layers, each with eight attention heads and a 128-dimensional hidden size.

We optimize with AdamW \citep{loshchilov2017decoupled}, using an initial learning rate of $10^{-4}$, weight decay of $10^{-4}$, and gradient-norm clipping at $1.0$. The learning rate follows a cosine decay schedule without restarts \citep{loshchilov2017sgdr}, over a maximum of 40 epochs with early stopping when there is no more than 0.3 percent improvement in the last 12 epochs. Each optimizer update uses 128 contexts, processed in two micro-batches with gradient accumulation.
\begin{figure}[tbp]
    \centering
    \caption{This figure shows the variation in performance of optimizing the softmax sampling temperature at test time on frozen models. For models trained on welfare, a higher temperature results in decreased welfare and increased total transfer.}
    \includegraphics[width=0.7\linewidth]{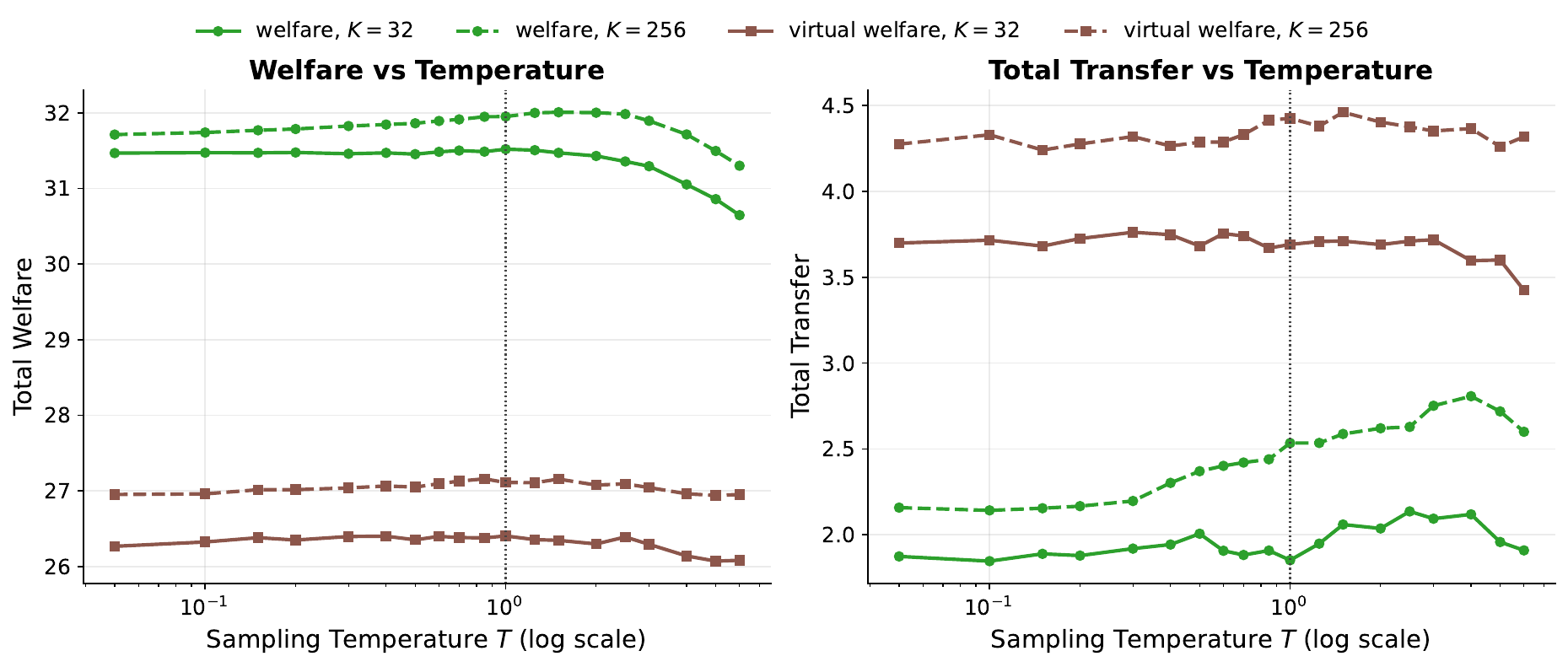}
    % \label{fig:topline_all}
    \label{fig:temperatureanalysis}
\end{figure}

\section{CGA and NGA training details}
\label{app:cgangatraining}
\subsection{Contextual Generative Auctions}
CGA~\citep{cga2024} is an ad auction mechanism in which each context consists of a user’s features, candidate ad features, and their bids, and predicted click-through rates for that user-ad pair. Each context has 30 candidate ads that are competing for three slots. The architecture consists of three separately trained networks, an evaluator, a generator, and the PaymentNet. 

The evaluator receives a historically shown ad sequence, user features, and existing individual-ad click predictions and is trained on historical click data to predict each ad’s click-through rate. It processes the sequence using a bidirectional LSTM and combines its outputs with user features through an MLP to predict an externality multiplier for each displayed ad. This multiplier is applied to the existing ad-quality estimate to obtain the adjusted click-through rate. Unlike ours, their externality model captures only interactions among ads.

The generator uses a self-attention encoder to represent the candidate ads and a GRU decoder to select them one at a time. The frozen evaluator scores the generated sequences, providing rewards for policy-gradient training that aims to maximize the virtual welfare of a three-ad sequence. Finally, PaymentNet receives the selected ads’ representations, bid information, and features derived from allocation probabilities and predicted click-through rates. It predicts a transfer fraction between zero and one, which is multiplied by the advertiser’s bid to determine its transfer per click. PaymentNet is trained to increase total transfer while penalizing ex-post regret, which measures how much advertisers could benefit by misreporting their bids. A tunable hyperparameter controls the strength of this penalty.

\paragraph{CGA ad-organic implementation.}
The original CGA paper considers only ads. To create a comparable baseline, we extend its candidate pool and architecture to include organic posts. In our implementation, the evaluator predicts each ad’s click-through rate after accounting for position and surrounding content, and each organic post’s welfare contribution after accounting for position. It is trained using our environment instead of historical data and learns to approximate the externality model used by the welfare function. In the generator, we added separate input projections and learned item-type embeddings to map both ad and organic types into a shared representation space, so the CGA generator's self-attention is over the entire candidate pool. Finally, similar to the original paper, PaymentNet is trained on synthetic auction contexts with the generator and evaluator frozen. It predicts each displayed advertiser’s transfer per click as a fraction of its bid, to maximize total transfer while penalizing ex-post regret. To estimate regret, we change an advertiser’s bid, regenerate the feed, and recompute transfers and click predictions to measure whether the advertiser benefits from misreporting. 

Our CGA implementation has approximately 203,000 parameters and uses a two-layer Transformer encoder with hidden dimension 64, four attention heads, and feed-forward width 256, a 64-dimensional GRU decoder, a bidirectional LSTM evaluator with attention and separate ad and organic prediction heads, and a PaymentNet is a four layer MLP with 13,000 training parameters.

\subsection{Non-autoregressive Generative Auctions}
NGA~\citep{nga2025} has a two-network architecture, a generator for allocation, and an evaluator to score allocations and calculate transfers. It extends the CGA architecture to also consider static surrounding organic content in its ad allocation. Unlike CGA's autoregressive GRU decoder, NGA uses item and position encoders to compute allocation scores for all item-position pairs, then creates candidate feeds item by item through a decoder non-autoregressively. The evaluator scores generator's feed by producing click-through rates, conversion rates, and transfer fractions per ad. These predictions are used to calculate a score based on the platform's total transfer for a feed.

\paragraph{NGA implementation.}
In our setting, organic posts are selected and reordered alongside ads, so we redesigned NGA's generator to encode ad-organic context and pass the shared item representations through an MLP to generate a feed. The evaluator is the same simulator-trained network we use for CGA, with separate heads for ad click-through rates and organic welfare contributions. In the welfare experiments, we train the generator with policy gradients on the combined welfare. Transfers come from the separately trained head of the evaluator that predicts transfer fraction to maximize virtual welfare and penalize regret. 

Our NGA implementation has approximately 180,000 parameters and uses the same encoder, evaluator, and a four layer transfer head, replacing the GRU decoder with learned 64-dimensional position embeddings and a three layer MLP that scores all item–position pairs in parallel in the generator.

\subsection{Regret}
Regret is the largest increase in an advertiser's utility that misreporting its bid can produce. We estimate it for each advertiser selected under truthful bidding. We iterate over a grid of 20 bids for an advertiser holding the other bids fixed, and recompute the allocation and transfers. We sum the positive gains across advertisers and normalize by the advertisers' truthful utility. During training, the regret penalty $\rho$ is $1$ unless otherwise noted.

\section{Additional Experimental Information}
\label{app:additionalexperiments}

\subsection{Virtual Welfare Experiments}
\label{app:virtualwelfare}

Our mechanism design setting is quite challenging, as the combinatorial structure of feeds makes optimization difficult. However, because advertiser welfares are click-through rates multiplied by scalars, it is a single-parameter setting, and therefore relatively well-behaved compared to general multi-parameter mechanism design~\citep{archer2001truthful,Myerson81}. In particular, for our setting, assuming that CTR is nondecreasing in a bidder's bid, and that the lowest bidder type receives 0 utility, the corresponding Myerson transfer is $t_i(B) = b_i q_i(x(B)) - \int_0^{b_i} q_i(x(z,B_{-i}))\,dz.$

A particularly useful property in single-parameter settings is that it is possible to define a quantity known as \textit{virtual welfare.} The \textit{virtual value} of a bidder with value $v_i$ is $v_i - \frac{1-F_i(v_i)}{f_i(v_i)}$. For any DSIC mechanism in which the low type has utility 0, expected virtual welfare equals expected total transfer.
Assuming a property known as \textit{regularity} of the distribution, the virtual-welfare-maximizing mechanism 
can be combined with an appropriate rule to maximize total transfer, while also being DSIC. 
For our experiments, we also define a virtual-welfare-optimizing MIR mechanism. Equipped with the appropriate Myerson transfer, this is also DSIC for regular distributions. Results on the default environment are in \Cref{fig:topline_vw}. MIRT competes with other methods on welfare, outperforms baselines on total transfer, and competes with CGA and NGA on total transfer while being exactly strategyproof.
\begin{figure}[tbp]
    \centering
    \caption{Results of training models to optimize a virtual welfare objective instead of welfare. MIR methods use Myerson transfers instead of VCG transfers. MIRT has a sampling temperature of 1, and CGA/NGA have $\rho=1000.$}
    \includegraphics[width=0.98\linewidth]{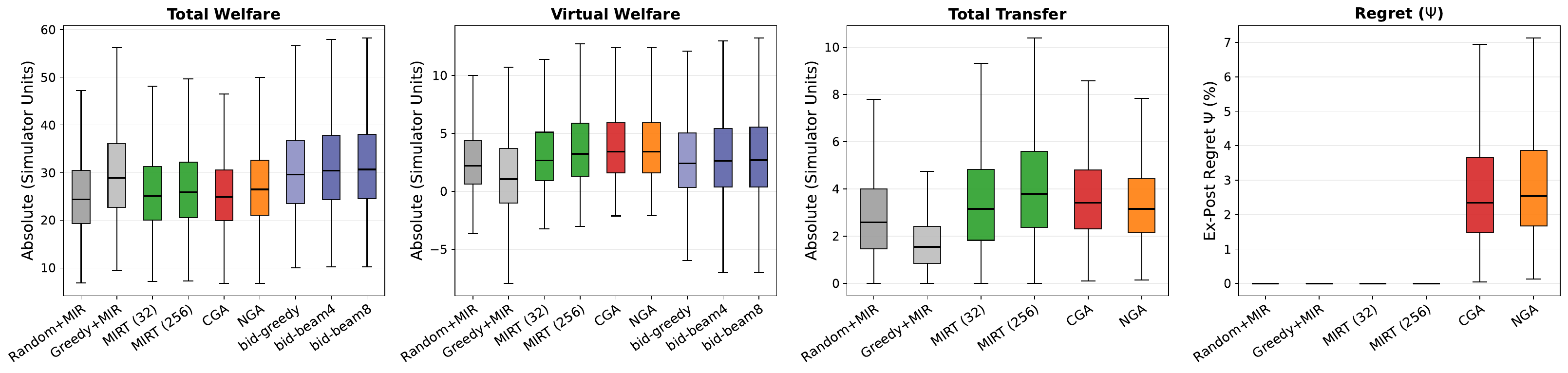}
    \label{fig:topline_vw}
\end{figure}

\subsection{Additional Information from Experiments in \Cref{sec:experiments}}
\begin{table}[htbp]
\centering
\caption{Mean ($\pm$ CI across held-out contexts) absolute total transfer across held-out contexts under each externality function with default problem size and bid distribution. Highlighted cells are the best-performing among mechanisms. Results in this table for all columns were generated using MIRT models trained for 40 epochs with a temperature of 1. CGA/NGA have $\rho=1$, resulting in 10-16\% regret.} %\mjc{confirmed these numbers are right}}
\label{tab:revenuefromorigtable}
\resizebox{\linewidth}{!}{
\begin{tabular}{l *{7}{c}}
\toprule
Method & Prev. Organic & Additive & Activation & Attention & Multiplicative & Topic & Position \\
\midrule
% Method & cols 2--8 of each original row
Random+MIR & $2.01 \pm 0.04$ & $2.06 \pm 0.04$ & $2.74 \pm 0.05$ & $2.36 \pm 0.05$ & $2.13 \pm 0.04$ & $2.00 \pm 0.04$ & $2.08 \pm 0.04$ \\
Greedy+MIR & $1.43 \pm 0.02$ & $1.42 \pm 0.02$ & $1.74 \pm 0.03$ & $1.58 \pm 0.03$ & $1.43 \pm 0.02$ & $1.40 \pm 0.02$ & $1.47 \pm 0.02$ \\
MIRT $(K = 32)$ & $2.13 \pm 0.03$ & $2.06 \pm 0.03$ & $2.25 \pm 0.04$ & $2.03 \pm 0.03$ & $2.28 \pm 0.04$ & $2.20 \pm 0.04$ & $2.22 \pm 0.04$ \\
MIRT $(K = 256)$ & $2.64 \pm 0.04$ & $2.52 \pm 0.04$ & $2.78 \pm 0.04$ & $2.58 \pm 0.04$ & $2.81 \pm 0.04$ & $2.73 \pm 0.04$ & $2.72 \pm 0.04$ \\
CGA & \hlg{$4.93 \pm 0.04$} & $3.89 \pm 0.02$ & $4.20 \pm 0.03$ & $4.51 \pm 0.03$ & \hlg{$5.13 \pm 0.04$} & $4.35 \pm 0.03$ & $4.69 \pm 0.03$ \\
NGA & $4.32 \pm 0.03$ & \hlg{$4.60 \pm 0.03$} & \hlg{$5.24 \pm 0.04$} & \hlg{$5.75 \pm 0.04$} & $4.85 \pm 0.04$ & \hlg{$4.51 \pm 0.03$} & \hlg{$5.12 \pm 0.04$} \\
\bottomrule
\end{tabular}
}
\end{table}

% \end{table}

\subsection{Varying problem sizes}
\label{sec:problem-size-experiments}
We study how the methods scale with the numbers of candidate ads, organic posts, and feed positions. For each problem size, we retain the bid distribution, externality function,  and organic-welfare weight from the reference environment. Results are in \Cref{tab:problem_size}. Our method continues to perform competitively against zero regret baselines in both total transfers and welfare across the problem sizes. 

\begin{table}[htbp]
\centering
\caption{Absolute total welfare and transfer (mean $\pm$ CI across held-out contexts) across different problem sizes, trained under the reference configuration. MIRT used a decoding temperature of 2.4, and CGA/NGA used $\rho=1000.$}
\label{tab:problem_size}
\resizebox{\textwidth}{!}{%
\begin{tabular}{l cc cc cc}
\toprule
& \multicolumn{2}{c}{\textbf{\makecell{$(N_O{=}12,\ M{=}8,\ T{=}12)$\\ \small 3 ads, 9 posts}}}
& \multicolumn{2}{c}{\textbf{\makecell{Default $(N_O{=}18,\ M{=}10,\ T{=}24)$\\ \small 6 ads, 18 posts}}}
& \multicolumn{2}{c}{\textbf{\makecell{$(N_O{=}36,\ M{=}18,\ T{=}36)$\\ \small 9 ads, 27 posts}}} \\
\cmidrule(lr){2-3} \cmidrule(lr){4-5} \cmidrule(lr){6-7}
\textbf{Method}
& \textbf{Welfare} & \textbf{Total Transfer}
& \textbf{Welfare} & \textbf{Total Transfer}
& \textbf{Welfare} & \textbf{Total Transfer} \\
\midrule
Random+MIR
& $17.838 \pm 0.105$ & $2.155 \pm 0.039$
& $27.155 \pm 0.214$ & $2.047 \pm 0.039$
& $35.043 \pm 0.125$ & $2.785 \pm 0.057$ \\
Greedy+MIR
& $20.048 \pm 0.122$ & $1.810 \pm 0.029$
& $30.708 \pm 0.240$ & $1.437 \pm 0.023$
& $44.619 \pm 0.172$ & $2.102 \pm 0.035$ \\
MIRT $(K{=}32)$
& $20.885 \pm 0.128$ & $2.585 \pm 0.037$
& $32.020 \pm 0.244$ & $2.258 \pm 0.036$
& $46.882 \pm 0.182$ & $3.025 \pm 0.044$ \\
MIRT $(K{=}256)$
& \hlg{$21.346 \pm 0.131$} & \hlg{$3.077 \pm 0.040$}
& \hlg{$32.500 \pm 0.246$} & $2.799 \pm 0.037$
& \hlg{$47.904 \pm 0.186$} & $4.017 \pm 0.048$ \\
CGA
& $20.222 \pm 0.117$ & $1.955 \pm 0.013$
& $31.084 \pm 0.238$ & \hlg{$3.125 \pm 0.021$}
& $47.076 \pm 0.172$ & $6.484 \pm 0.050$ \\
NGA
& $20.238 \pm 0.117$ & $1.788 \pm 0.011$
& $31.095 \pm 0.238$ & $2.411 \pm 0.013$
& $47.302 \pm 0.172$ & \hlg{$6.761 \pm 0.046$} \\
\midrule
bid-greedy
& $20.260 \pm 0.117$ & n/a
& $31.121 \pm 0.238$ & n/a
& $47.391 \pm 0.172$ & n/a \\
bid-beam4
& $20.890 \pm 0.123$ & n/a
& $31.956 \pm 0.243$ & n/a
& $49.028 \pm 0.180$ & n/a \\
bid-beam8
& $21.145 \pm 0.127$ & n/a
& $32.205 \pm 0.244$ & n/a
& $49.516 \pm 0.187$ & n/a \\
\bottomrule
\end{tabular}%
}
\end{table}

\begin{table}[htbp]
\centering
\caption{Absolute metrics across methods for the reference configuration as described in \Cref{sec:experiments}. Means with approximate 95\% confidence intervals across held-out test contexts in the test set, alongside medians.
Results in this table were generated with 40 epochs of training for each MIRT model with inference time temperature $2.4$. For CGA/NGA, $\rho=1000$.}
\label{tab:absolute_experiment1}
\resizebox{\textwidth}{!}{%
\begin{tabular}{l cc cc cc cc cc}
\toprule
& \multicolumn{2}{c}{\textbf{Organic Welfare}}
& \multicolumn{2}{c}{\textbf{Advertiser Welfare}}
& \multicolumn{2}{c}{\textbf{Total Welfare}}
& \multicolumn{2}{c}{\textbf{Total Transfer}}
& \multicolumn{2}{c}{\textbf{Regret ($\Psi$)}} \\
\cmidrule(lr){2-3} \cmidrule(lr){4-5} \cmidrule(lr){6-7} \cmidrule(lr){8-9} \cmidrule(lr){10-11}
\textbf{Method} & \textbf{Mean $\pm$ CI} & \textbf{Median} & \textbf{Mean $\pm$ CI} & \textbf{Median} & \textbf{Mean $\pm$ CI} & \textbf{Median} & \textbf{Mean $\pm$ CI} & \textbf{Median} & \textbf{Mean $\pm$ CI} & \textbf{Median} \\
\midrule
Random+MIR   & $18.937 \pm 0.202$ & $17.307$ & $8.217 \pm 0.097$  & $7.362$  & $27.155 \pm 0.214$ & $25.739$ & $2.047 \pm 0.039$ & $1.687$ & $0.000 \pm 0.000$ & $0.000$ \\
Greedy+MIR   & $20.752 \pm 0.217$ & $19.089$ & $9.956 \pm 0.112$  & $8.975$  & $30.708 \pm 0.240$ & $29.177$ & $1.437 \pm 0.023$ & $1.237$ & $0.000 \pm 0.000$ & $0.000$ \\
MIRT (32)    & $20.554 \pm 0.214$ & $18.877$ & $11.466 \pm 0.124$ & $10.412$ & $32.020 \pm 0.244$ & $30.486$ & $2.258 \pm 0.036$ & $1.949$ & $0.000 \pm 0.000$ & $0.000$ \\
MIRT (256)   & $20.562 \pm 0.214$ & $18.910$ & $11.938 \pm 0.128$ & $10.865$ & $32.500 \pm 0.246$ & $30.904$ & $2.799 \pm 0.037$ & $2.491$ & $0.000 \pm 0.000$ & $0.000$ \\
CGA          & $20.915 \pm 0.217$ & $19.239$ & $10.169 \pm 0.112$ & $9.215$  & $31.084 \pm 0.238$ & $29.558$ & $3.125 \pm 0.021$ & $3.006$ & $0.105 \pm 0.002$ & $0.085$ \\
NGA          & $20.874 \pm 0.217$ & $19.199$ & $10.221 \pm 0.112$ & $9.278$  & $31.095 \pm 0.238$ & $29.545$ & $2.411 \pm 0.013$ & $2.349$ & $0.112 \pm 0.002$ & $0.091$ \\
bid-greedy   & $20.863 \pm 0.217$ & $19.196$ & $10.257 \pm 0.112$ & $9.309$  & $31.121 \pm 0.238$ & $29.592$ & \multicolumn{1}{c}{n/a} & \multicolumn{1}{c}{n/a} & \multicolumn{1}{c}{n/a} & \multicolumn{1}{c}{n/a} \\
bid-beam4    & $20.847 \pm 0.216$ & $19.165$ & $11.109 \pm 0.119$ & $10.160$ & $31.956 \pm 0.243$ & $30.421$ & \multicolumn{1}{c}{n/a} & \multicolumn{1}{c}{n/a} & \multicolumn{1}{c}{n/a} & \multicolumn{1}{c}{n/a} \\
bid-beam8    & $20.830 \pm 0.216$ & $19.151$ & $11.374 \pm 0.122$ & $10.395$ & $32.205 \pm 0.244$ & $30.668$ & \multicolumn{1}{c}{n/a} & \multicolumn{1}{c}{n/a} & \multicolumn{1}{c}{n/a} & \multicolumn{1}{c}{n/a} \\
\bottomrule
\end{tabular}%
}
\end{table}